
\documentclass[11pt]{article}
\usepackage{amsmath,amssymb,amsfonts}

\usepackage{fullpage}
\usepackage{amsmath}
\usepackage{times}

\usepackage{float}


\usepackage{amsmath}

\usepackage{float}

\newcommand{\comment}[1]{}

\floatstyle{ruled}
\newfloat{algorithm}{h}{loa}
\floatname{algorithm}{Algorithm}

\newtheorem{definition}{Definition}
\newtheorem{theorem}{Theorem}
\newtheorem{corollary}{Corollary}
\newtheorem{lemma}{Lemma}
\setlength{\parindent}{0.0in}
\setlength{\parskip}{1ex}
\newcommand{\qed}{\mbox{\ \ \ }\rule{6pt}{7pt}}%
\newenvironment{proof}{\par{\bf Proof:}}{\qed \par}
\newenvironment{sketch}{\par{\sc Proof Sketch:}} {\qed \par}

\newcommand{\poly}{\mathrm{poly}}

\newcommand{\al}{\alpha}
\newcommand{\be}{\beta}
\newcommand{\ga}{\gamma}
\newcommand{\ta}{\theta}
\newcommand{\de}{\delta}
\newcommand{\ve}{\varepsilon}

\newcommand{\listt}{{\mathcal L}}
\newcommand{\s}{s}
\newcommand{\ts}{\tilde{s}}
\newcommand{\tS}{\tilde{S}}
\newcommand{\aff}{a}

\newcommand{\vn}{v}
%

%
\newtheorem{claim}{Claim}


\renewcommand{\Pr}{{\bf Pr}}

\def\sizeof#1{\left|#1  \right|}

\def\setof#1{\left\{{\let\st\colon #1 }\right\}}

\newcommand{\W}{\mathrm{A}}
\newcommand{\w}{\mathrm{a}}

\newcommand{\fracconnect}{\theta}

\begin{document}

%

\title{Finding Endogenously Formed Communities}

\author{
Maria-Florina Balcan\footnote{School of Computer Science, College of Computing, Georgia Institute of Technology, Atlanta, Georgia.}
\and Christian Borgs\footnote{Microsoft Research, New England, Cambridge, MA.}
\and Mark Braverman \footnote{Princeton University and University of Toronto}
\and Jennifer Chayes\footnote{Microsoft Research, New England,  Cambridge, MA.}
\and Shang-Hua Teng\footnote{Computer Science Department, University of Southern California.}
}

\maketitle

\begin{abstract}
A central problem in e-commerce is determining overlapping communities
(clusters) among
individuals or objects in the absence of external identification or tagging.
We address this problem by introducing a framework that captures the notion
of communities or clusters determined by the relative affinities among their members.
To this end we define what we call an affinity system, which is a set of elements,
  each with a vector characterizing its preference for all other elements in the set.
We define a natural notion of (potentially overlapping) communities in an affinity system,
  in which the members of a given community collectively prefer each other to anyone else
  outside the community.
Thus these communities are endogenously formed in the affinity system
and are ``self-determined'' or ``self-certified'' by its members.

We provide a tight polynomial bound on the number of self-determined communities
     as a function of the robustness of the community.
We present a polynomial-time algorithm for enumerating these communities. Moreover,
we obtain a local algorithm with a strong stochastic performance guarantee
    that can find a community in time nearly linear in the of size the community (as opposed to the size of the network).

Social networks and social interactions fit particularly naturally
within the affinity system framework -- if we can appropriately
extract the affinities from the relatively sparse yet rich
information from social networks and social interactions,
 our analysis
then yields a set of efficient algorithms for enumerating self-determined communities in social networks. In the context of social networks we also connect our analysis with results about $(\alpha,\beta)$-clusters introduced by Mishra, Schreiber, Stanton, and Tarjan~\cite{mishra-conf,mishra}.
In contrast with the polynomial bound we prove on the number of communities in the affinity system model, we show that there exists a family of networks with superpolynomial
    number of $(\alpha,\beta)$-clusters.


\end{abstract}

%
%

%

\maketitle

\section{Introduction}
\label{sec:intro}

\noindent {\bf Affinity Systems~} The problem of identifying
endogenously\footnote{endogenous: growing or developing from within; originating
  within.} formed overlapping communities or clusters
arises in many contexts within e-commerce: finding overlapping communities in a social network, clustering retail products using collaborative filtering, clustering documents using citation information, classifying videos using viewing logs, etc. In such settings one needs to cluster the set of objects into meaningful, potentially overlapping subsets  by only using information about relations between
the objects. In this paper we develop the notion of an affinity system to model these scenarios.

 An affinity system
is a collection of elements with a set of ``preferences" each of these elements
has over other elements within the system. These preferences may be
expressed as a vector of  rankings, or, more generally, as a vector of non-negative weights representing affinities.
For example, when clustering videos,   affinities may represent the likelihood of the videos to be co-watched, with videos
that are co-watched more often ``ranking" each other higher.
When clustering documents, a document will ``prefer" documents it cites over documents it doesn't.

Perhaps the most natural application of affinity systems is to the study of social networks.
Social interaction is often determined by affinities among the members.
For example, in daily life, we often stay more in touch with people we like more.
When we go to a conference, we often hang out more with people with whom
we share more interests. Therefore, these social interactions, and their manifestations
as online social networks fit well within the affinity system paradigm.


\comment{
Social interaction is often determined by affinities among the members.
For example, in daily life, we often stay more in touch with people we like more.
When we go to a conference, we often hang out more with people with whom
we share more interests.  We formalize this notion by defining what we call
an affinity system,  which is a collection of members and a set of preferences
of each member for all others in the system -- expressed either as a vector of
 integer rankings, or, more generally, as a vector of non-negative weights representing affinities.

Affinity systems arise in many areas, from social sciences to data mining.
For example, affinity systems are a way to specify a simple generalization
of the popular stable roommate problem
in social choice, where rooms may have flexible capacities, and the problem
is to find stable subsets of members in an affinity system given by specified rankings~\cite{FractionalGames}.
In studying affiliations of academics, artists, and politicians, affinity systems
can be formed to describe the degree of appreciation that
one member has towards the work of others.
In data mining, pair-wise similarities among documents,
images, or DNA sequences define natural (weighted) affinity systems~\cite{SchSmo02,STC:book04}.
}
\smallskip

\noindent {\bf  Endogenously Formed Communities in Affinity Systems~}
A central question concerning groups of individuals, documents, products, etc.,
is how to determine communities, or {\em overlapping} clusters that
capture the coherence among their members.
For example, in the context of retail products discussed above, it may be useful to automatically
``tag" the products with multiple categories for subsequent personalized marketing.
In the context of professional networks, a person may belong to multiple explicit  or implicit communities, for example
a scientist may simultaneously belong to the community of Economists and the community of Computer Scientist.
The question of finding overlapping communities is closely related to the very well studied question of clustering~\cite{dudahart:book01},
but is much more general, since now elements may (and will) belong to multiple communities.
%

In this paper we formalize a natural notion of self-determined community and
 develop efficient algorithms to identify overlapping communities of this type as well as general bounds on the number of such communities.
Self-determined communities correspond to subsets that collectively
prefer each other more than they prefer those outside the subset,
where preference is defined by the rankings or
weights of the affinity system.
These communities are endogenously formed in the affinity system.
What is particularly nice about this formulation is that
we do not require that the subsets be of pre-specified sizes.  For example, a solution of
the flexible capacity roommate problem would group together people who prefer living
with each other to living with anyone else in another room. Switching to the context
of social and professional networks, an academic community can be viewed as a group
of scholars which appreciates the work of others in the community to that of the work
of people outside their community.
In all these cases, the
overlapping communities or clusters are self-certified or self-determined.

\comment{
In the present paper we argue that communities can be self-determined
naturally by finding subsets in which each member is collectively preferred
more by all other members of the subset than by any of those outside the
subset, where preference is defined by the rankings or weights of
the affinity system.
What is particularly nice about this formulation
is that we do not require that the subsets be of pre-specified sizes.
Going back to the products example, a Lenovo laptop computer may be part of a cluster that contains ``laptops",
a bigger cluster that contains ``electronics", an a potentially overlapping smaller category of ``expensive Christmas gifts".
In this, and all other settings considered here, the overlapping
communities or clusters are self-certified or self-determined. We emphasize that no information beyond affinity data is needed
to reconstruct the overlapping clustering.

For example, a solution of the flexible capacity roommate problem would
group together people who prefer living with each other to living with
anyone else in another room.  Switching to the context of social and professional networks,  an academic community can be
viewed as a group of scholars which appreciates the work of others
in the community to that of  the work of people outside their community.
In all these cases, the overlapping
communities or clusters are self-certified or self-determined.
}

More formally, we study the mathematical structure of
   self-determined communities in an affinity system and
   design efficient algorithms for discovering them.
In our most basic model, we have $n$ members $V = \{1,...,n\}$
  in an affinity system, and we assume each member $i$ states
  a strict ranking $\pi_i$ of all members
  in the order of her preferences. 
To evaluate whether a subset $S$ of size $|S| = k$ is a good community, imagine that
  each member $s\in S$ casts a vote for each of its $k$ most preferred members $\pi_s(1:k)$.
The number of votes that member $i$ receives,
   $\phi_S(i) = |\{i \in \pi_s(1:|S|) | s \in S\} | $, is the collective preference given by $S$.
We say $S$ is {\em self-determined} if everyone in $S$ receives
   more votes from $S$ than everyone outside $S$.

Different self-determined communities may have different degree of
coherence or robustness
  depending on both the fraction of votes received by the community members as well as
  the gap between the fraction of votes received by the
  community members and the non-community members.
To capture this, we say $S$ is a {\em $(\fracconnect,\alpha,\beta)$ self-determined community},
  for $0\leq \beta < \alpha \leq 1$ and $\fracconnect > 0$ if

\begin{enumerate}
\item [$\bullet$] each member $s\in S$ casts a vote for each of its $\fracconnect|S|$ most preferred members $\pi_s(1:\fracconnect|S|)$.
\item [$\bullet$] for each $i \in S$, the amount of vote $i$ receives, $\phi^{(\fracconnect)}_S(i) = |\{i \in \pi_s(1:\fracconnect|S|) | s \in S\} | $, is at least $\alpha|S|$.
\item [$\bullet$] for each $j \not\in S$, the amount of vote $j$ receives, $\phi^{(\fracconnect)}_S(j) = |\{j\in \pi_s(1:\fracconnect|S|) | s \in S\} | $, is at most $\beta|S|$.
\end{enumerate}

We start by analyzing how many communities can exist in an affinity system. Interestingly, we show that for constants $\alpha, \beta, \fracconnect$ we have a polynomial bound of
   $n^{O(\log(1/\alpha)/\alpha)}$ on the number of
   $(\fracconnect,\alpha,\beta)$-self-determined communities. Our analysis, using probabilistic methods, also yields a polynomial-time algorithm
  for enumerating these communities.
Moreover, we show that our bound is nearly tight, by exhibiting an  affinity
    system with $n^{\Omega(1/\alpha)}$
    $(\fracconnect,\alpha,\beta)$-self-determined communities.

We then present a local  community finding algorithm that is very efficient for an interesting range of parameters.
This algorithm, when given robustness parameters $\fracconnect, \alpha, \beta$, and a member $v \in V$, either returns
   a $(\fracconnect,\alpha,\beta)$-self-determined community of size $t$
   in time $O(f(\alpha,\beta,\fracconnect) \cdot t\log t)$ or  an empty set.
The algorithm satisfies the following performance guarantee: if $\alpha > 1/2$,
    if $v$ is chosen uniformly at random from a $(\fracconnect,\alpha,\beta)$-self-determined community  $S$, then with
   probability $\Omega(2\alpha-1)$, our local algorithm will successfully recover $S$ and so in time dependent only (and nearly) on the $|S|$
   and not on the size of the entire affinity system.
   As a consequence of this analysis, we can show that in the (natural)
   case when $\al>1/2$ we obtain a {\em near-linear} algorithm for finding all self-determined communities, substantially
   improving on the polynomial-time guarantee discussed above. Quasi-linear local algorithms are particularly important
   in the context of studying internet-scale networks, where even quadratic-time algorithms are not feasible, and where
   one sometime does not have access to the entire network but only to a local portion of it.
   The quasi-linear algorithm is one of our main technical contributions, as its techniques can potentially be used
   to convert other polynomial cluster-detection algorithms into local quasi-linear algorithms -- at least in the average case.

We also study {\em multi-facet affinity systems} where each member may have a
   number of different rankings of other members.
For example, member $i$ may have two rankings $\pi_{i,fun}$ and $\pi_{i,science}$,
  where first ranks members by how much fun $i$ thinks they are and the second
   ranks them according to academic affinity.
In this context, we say $S$ is a self-determined community if there exists a vector
of choices of rankings (in this case, in $\{fun, science\}^{|S|}$) such
that if members vote according to their associated choice, the
resulting votes self-certify $S$.
We prove that if each member has a constant number of rankings, all our results can be extended,
   even though there could be exponential number of  combinations of rankings.

  %

Our results can be extended to weighted affinity systems where the affinities of each member are given by a numerical
  weighting rather than just an ordinal ranking.
For example, member $i$ may give her most preferred member weight $1$,  next two preferred members weight  $0.7$,
  next one weight $0.5$, and so on. A weighted affinity system can be expressed as $A = \{V, a_1,...,a_n\}$, where $a_i$ is a $n$-dimensional vector $a_i = (a_{i,1},...,a_{i,n})$ and $0 \leq a_{i,j} \leq 1$ specifies the degree of affinity that $i$ has for $j$.
One can naturally define $(\fracconnect,\alpha,\beta)$-self-determined communities for weighted affinity systems.
The only requirement is that members are only allowed to cast votes up to a total weight of $\ta t$ when voting for a community of size $t$,
while respecting the affinity system.
We show that all our  bounds and algorithmic results extend to weighted affinity systems with only a slight loss in the parameters.

\smallskip

\noindent {\bf Endogenously Formed Communities in Social Networks~} Our general formulation enables us to shed light on the challenging task of defining and finding overlapping communities in social networks~\cite{modularity,mishra,mishra-conf,jure09,jure10}.
Typically, a social network can be viewed as graph $G=(V,E)$, where the edges could be either undirected
   (e.g., the Facebook social network determined by friend list) or directed (e.g. the Twitter network).
An edge could be unweighted or weighted (e.g., the Skype phone-call networks or the Facebook network based
on the number of times that one person writes on the wall of others).

It turns out that a social network can be realized as a projection of an affinity system.  Indeed, although our affinity systems are typically dense, their projections as social networks can be very sparse – as are many observed social networks.  We can think of our observed social network interactions as being induced (in various ways) by the underlying latent set of affinities.  To be precise, given a social network $G = \{V,W,w\}$ with weights $w =(w_{ij})$, we would like to recover the communities in the original affinity system.
  A natural way to do this is to lift the social
network back to an affinity system $A = \{V, a_1,...,a_n\}$ and then to solve the problem in
the lifted system.  For example, several natural approaches for
lifting based on different beliefs about how the social network may have emerged from an underlying set of affinities include:

%
%
\begin{enumerate}
 \item {\em Direct
Lifting}: One can directly lift to an affinity system by defining $a_{i,j} = w_{i,j}$ if $(i,j) \in E$, otherwise $a_{i,j} = 0$
(we assume  WLOG that $w_{i,j}\in [0,1]$)).

\item {\em Shortest Path Lifting}: If $G=(V,E)$ is an unweighted
social network, and the shortest path distance from $i$ to $j$ is
$d_{i,j}$, one may define $a_{i,j} = 1/d_{i,j}$.  The shortest path
lifting can be extended to weighted cases by appropriated
normalization.

\item {\em Personal Page Rank Lifting}: Let  $p_i$ be the personal
PageRank vector \cite{AndersenChungLang} of vertex $i$, we define
$a_{i,j} = p_{i,j} /\max(p_i)$.

\item {\em Effective Resistance Lifting}:  Let $r_{i,j}$ be effective
resistance of from $i$ to $j$ by viewing $G$ a network of resistors,
using $1/w(e)$ as the resistance of $e\in E$ \cite{Doyle:book}, we
define $a_{i,j} = min_k(r_{i,k} /r_{i,j})$.
\end{enumerate}
%
 Each style of lifting corresponds to a
particular belief on how this social network may have emerged from a
latent underlying affinity system.  For instance, Direct Lifting corresponds to the belief that
    a social network $G = (V,E)$, such as the Twitter network,
   arose from a latent affinity $A' = (V, \{a_1',..., a_n'\})$  by a process in which each
member $i$ connects to the $d_i$ top most elements 
   according to the affinity system $A'$.
In other words, $i$ follows $j$, i.e., $(i,j)$ is a directed edge in this social network,
   if $j$ is among the $d_i$ top most elements of $i$ according to $A'$.
Similarly, one can think of
Shortest Path Lifting as corresponding to the belief that the social
network serves as an approximate spanner of the underlying
affinity system \cite{spanner:book}, and Effective Resistance Lifting
corresponds to the belief that a social network is
approximately based on some spectral sparsification of those underlying
affinities \cite{SpielmanTengPrecon}.


Given a social network, 
once we derive a corresponding affinity system $A$,
  we may use our notion of self-determined community and apply our algorithms and analysis
  to obtain communities in the original network.
From our analysis for affinity systems, we immediately obtain that there is a polynomial
   number of such communities in a social network, and they
   can be enumerated in polynomial time.

 We note that while the input social network is potentially very sparse, appropriate lifting procedures can produce an affinity system  better reflecting the true relationships between entities. Moreover,  many can be performed locally, allowing for our local algorithm to
determine meaningful communities especially efficiently. 

We also note that our study of multi-facet affinity systems allows us to model and analyze
   communities in more complex social networks -- such as  such Google+ with circles which enable its users to share different things with different circles of people. This extension may also enable us to model interdisciplinary sub-fields according to scientific works or interactions.

\smallskip

\noindent {\bf Self-determined Communities and $(\alpha,\beta)$-clusters~}  In this paper, we also provide several new results for communities defined as $(\alpha,\beta)$-clusters, a notion introduced by
Mishra, Schreiber, Stanton, and Tarjan~\cite{mishra-conf} for analyzing (unweighted) social networks.
In their definition,  $S$ is an $(\alpha,\beta)$-cluster (for $\alpha > \beta$) if
    for every $i\in S$, the number of neighbors coming
    from $S$ is at least $\alpha |S|$ and for every $j\not\in S$,
    the number of neighbors coming from $S$ is at most $\beta |S|$.
We prove that there exists a family of networks with superpolynomial
  number of $(\alpha,\beta)$-clusters.
For instance, if $\alpha = 1$ and $\alpha -\beta = 0.01$, then in $G(n,1/2)$, the Erd\"{o}s-Renyi random graph with $1/2$ edge probability, the expected number of $(\alpha,\beta)$-clusters is $n^{\Omega(\log n)}$.
We also show that under the assumption that the planted clique problem is hard, even finding a
            {\em single} $(\alpha,\beta)$-cluster
  is computational hard.
 Interestingly, our notion of communities in social networks obtained via direct lifting is quite similar
 to the notion of  $(\alpha,\beta)$-clusters, with the only difference that we bound the total amount of votes a member may cast.\footnote
 {For example, for the direct lifting of $G$, the notion of the community we obtain is as follows:
 $S$ is a $(\fracconnect,\alpha,\beta)$-self-determined community in $G$ if every $i \in S$
   receives at least $\alpha |S|$ collective vote and everyone not in $S$
   receives at most $\beta |S|$ collective vote.
If $G$ is unweighted, $(i,j) \in E$, $i\in S$, and $d_i$ is the out-degree of $i$,
   then one way to set up the affinity system is to let  $i$ contribute $\min(1,\fracconnect |S|/d_i)$ to the collective vote of $j$.}
This twist seems to be essential to obtain only a polynomial number of such communities
  and to be able to enumerate them in polynomial time.


\smallskip
\noindent{\bf Related Work~} Problems of clustering or grouping data (based on network or pairwise similarity information or ranked data) have been extensively studied in many different fields. The classic goals have been to produce either a partition or a hierarchal clustering of the data~\cite{dudahart:book01,CGTS99,BBV08,cksw,bob07,jms06}. With the rise of online social networks, there has been significant recent interest in identifying overlapping clusters, or communities, in networks ranging
from professional contact networks to citation networks to
product-purchasing networks, with many heuristics and optimization criteria being proposed~\cite{jure09,jure10,modularity,mishra-conf,mishra,hop-11}.
However, much of this work has  disallowed natural communities such as those
containing highly popular nodes~\cite{SpielmanTengCuts,condkonst,jure09,jure10} or not given general guarantees on the number or
computation time needed to find all overlapping communities meeting
natural criteria~\cite{clsw10,mishra-conf,mishra,hop-11}. By contrast, our new formalization
 leads to natural communities and efficient algorithms
 for identifying all such communities. Additionally, our model allows us to deal with asymmetries in the input in a very natural way.

Independently, in recent work \cite{arora11} consider several assumptions (that are between worst case and average case)
concerning community structure and provide efficient algorithms in these settings.
Remarkably, while their setting is somewhat different from ours
some of their algorithms are similar in spirit. 



\section{Preliminaries and Notation}
In our most basic model, we consider an affinity system with $n$ members $V = \{1,...,n\}$ and
 assume that each member $i \in V$ states a strict ranking $\pi_i$ of all members  in the order of her preferences. Let $\Pi=\{\pi_1, \ldots, \pi_n\}$.
For $t >0 $, $S \subseteq V$, $i \in V$ we denote by $\vn^t_S(i)$ the number of members in $S$ that place $i$ among the topmost
$t$ elements of their preference list. That is $\vn^t_S(i) = \sizeof{\setof{s\in S | i\in
    \pi_s(1:t)}}.$
For $\fracconnect >0$,  we let $\phi^{\fracconnect}_S(i):= \vn^{\lceil \fracconnect \sizeof{S}\rceil}_S(i)$. We define a natural notion of self-determined community as follows:
\begin{definition}
\label{prop:self-stable}
Given three positive parameters $\fracconnect, \alpha, \beta$, where $\beta < \alpha \leq 1$ and an affinity system $(V, \Pi)$ we say
that a subset $S$ of $V$ is an ($\fracconnect$, $\alpha$, $\beta$) {\em self-determined} community with respect to $(V, \Pi)$ if we have both
\begin{enumerate}
\item[(1)] For all $i\in S$, $\phi^\fracconnect_S(i) \geq \alpha \sizeof{S}$.
\item[(2)] For all $j\not\in S$, $\phi^\fracconnect_S(j) \leq \beta \sizeof{S}$.
\end{enumerate}
\end{definition}

 Throughout the paper, we will denote by $\gamma=\alpha-\beta$. 
   Fixing $\fracconnect$, we say that ``$i$ votes for $j$ with respect to a subset $S$'' if   $ j \in  \pi_i(1:\lceil \fracconnect \sizeof{S}\rceil)$.
When $S$ is clear from the context we say that $i$ votes for $j$.

Note that communities {\em may overlap}. As a simple example, assume we have two sets $A_1$ and $A_2$ of size $n/2$ with $n/8$ nodes in common
(representing, say, researchers in Algorithms and researchers in Complexity). Assume each node in $A_i \setminus A_j$ ranks first the nodes in $A_i$ and then the nodes in $A_j$ and that each node in $A_i \cap A_j$ ranks the nodes in $A_i \cup A_j$ arbitrarily. Then each $A_i$ is a $(1, 3/4, 1/4)$ self-determined community.

We also  consider (more general) weighted affinity systems,  where the preferences of each member $i$ involve numerical weightings
(degrees of affinity) rather than just an ordinal ranking. A weighted affinity system is expressed as $A = \{V, a_1,...,a_n\}$,
where $a_i$ is a $n$-dimensional vector
$a_i = (a_{i,1},...,a_{i,n})$ and $0 \leq a_{i,j} \leq 1$ specifies the degree of affinity that $i$ has for $j$.
For example, $i$ may give her top-ranked node a weight of $1$,
 she might have a tie between its second and third-ranked nodes giving both a weight of $0.7$,
and so on.
If member $i$ chooses not to vote for a given node, this can be modeled by giving that node a weight of $0$.

We can naturally extend our notion of $(\fracconnect,\alpha,\beta)$-self-determined communities to weighted affinity systems. 
In the definition of voting, the only requirement is that members are only allowed to cast votes up to a total weight of $\ta t$ when voting for a community of size $t$.
%
For example,  to evaluate whether a subset $S$ is a good community, 
  each member $s \in S$ casts a weighted vote as follows: $s$ determines a prefix of the weights (sorted from highest to lowest)
   of total value $\fracconnect|S|$ and zeros out the rest. If there are ties at the boundary,  a natural 
   conversion is  to scale down the weights of those nodes just at the boundary to make the sum exactly equal to $\fracconnect|S|$.
In general, we denote the resulting vector (after capping the amount of vote a member casts when voting for a community of size $t$) as
$\w_{s}^{\fracconnect|S|}$.
The amount of the weight that member $i \in V$ receives from $S$ is $\w^{\fracconnect}_S(i) = \sum_{s\in S} \w_{s,i}^{\fracconnect|S|}$.
Given these, we can define an ($\fracconnect$, $\alpha$, $\beta$) {\em weighted self-determined} community as follows:

\begin{definition}
\label{prop:self-stable:weighted}
Given $\fracconnect, \alpha, \beta \geq 0$, $\beta < \alpha \leq 1$  and an weighted affinity system $(V, \W)$ we say
that a subset $S$ of $V$ is an ($\fracconnect$, $\alpha$, $\beta$) {\em weighted self-determined} community with respect to $(V, \W)$ if we have both
\begin{enumerate}
\item[(1)] For all $i\in S$, $\w^\fracconnect_S(i) \geq \alpha \sizeof{S}$.
\item[(2)] For all $j\not\in S$, $\w^\fracconnect_S(j) \leq \beta \sizeof{S}$.
\end{enumerate}
\end{definition}

We note that given an (weighted) affinity system and a set $S$ we can {\em test} in time polynomial in $n$ whether  a proposed set $S$ is a
$(\fracconnect,\alpha,\beta)$-self-determined community or not. Also, fixing a $(\fracconnect,\alpha,\beta)$-self-determined community $S$,
one can easily show that there exists a multiset $U$ of size $k(\gamma)=2 \log {(4 n)} /\gamma^2$ such that
the set of elements $i$ voted by at least a $(\alpha-\gamma/2)$ fraction of $U$ (or in the weighted  case, the set of elements $i$
receiving  $(\alpha-\gamma/2) |U|$ total vote from $U$)  is identical to $S$.
 This then implies a very simple quasi-polynomial procedure for finding all self-determined communities,
as well as an $n^{O(\log{n}/\gamma^2)}$ upper bound on the number of
$(\fracconnect,\alpha,\beta)$-self-determined communities. (See Appendix~\ref{additional} for details).

In this paper we present a {\em multi-stage} approach for finding an unknown community  in an affinity
system that provides much better guarantees for interesting settings of the parameters.
At a generic level, this algorithm takes as input information $I$ about an unknown community $S$ and outputs a list  $\listt$
 of subsets of $V$ s.t. if  information $I$ is correct with respect to $S$, then with high probability $\listt$ contains $S$.
  This algorithm has two main steps: it first generates a list $\listt_1$ of sets $S_1$ s.t. at least one of the elements in $\listt_1$
  is a rough approximation to $S$ in the sense that $S_1$ nearly contains $S$ and it is not much larger than $S$. In the second step,
  it  runs a purification procedure to generate a list $\listt$ that contains $S$. (See Algorithm~\ref{alg:selfstable}.)
   Both steps have to be done with care by exploiting properties of self-determined communities and we will describe in detail in the following sections
ways to implement both steps of this generic scheme. We also discuss how to adapt this scheme  for outputting a self-determined community in a local manner, for enumerating all self-determined communities, as well as extensions to  multi-facet affinity systems and applications of our analysis to social networks.
\begin{algorithm}
  \caption{A generic algorithm for identifying an unknown community $S$}
  {\bf Input:}   Preference system $(V, \Pi)$, information $I$ about an unknown community $S$.
\begin{enumerate}
\item[($1$)] Using information $I$ to generate a list  $\listt_1$ of sets $S_1$ s.t. at least one of the elements in $\listt_1$ is a rough approximation to $S$.
\item[($2$)] Run a purification procedure to generate a list $\listt$ s.t. at least one of the elements in $\listt$ is identical to $S$.
  \item[($3$)] Remove from the list $\listt$ all the sets that are not self-determined  communities.
\end{enumerate}
{\bf Output:}  List of self-determined communities $\listt$.
\label{alg:selfstable}
\end{algorithm}


\section{Finding Self-determined Communities}
\label{se:selfdeterminedalgo1}
In this section we show how to instantiate the generic Algorithm~\ref{alg:selfstable} if the information
we are given about the unknown community $S$ is its size and the parameters  $\fracconnect$, $\alpha$, and
$\beta$. We show that this leads to a polynomial time algorithm in the case where  $\fracconnect$, $\alpha$, and $\beta$ are constant.
We start with a  structural result showing that for any self-determined  community $S$ there exist a small number of community members
 s.t. the union of their votes contains almost all $S$.
\begin{lemma}
\label{multiple}Let $S$ be a $(\fracconnect,\alpha,\beta)$-self-determined  community.
Let $\gamma=\alpha-\beta$, $M ={\log{\left(16/\gamma\right)}}/{\alpha}$.
There exists a set $U$, $|U| \leq M$ s.t. the set $S_1= \{ i \in V| \exists s \in U, i \in \pi_s(1: \theta |S|) \}$
satisfies $|S \setminus S_1| \leq (\gamma/16) |S|$.
\end{lemma}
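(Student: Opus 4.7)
The plan is to construct $U$ by a standard greedy covering argument confined to $S$. For each $s\in S$, define $T_s := \pi_s(1{:}\lceil \theta|S|\rceil)\cap S$, the members of $S$ that $s$ votes for. By condition (1) of the self-determined definition, every $i\in S$ satisfies $\phi^\theta_S(i)\ge\alpha|S|$, meaning $i$ lies in at least $\alpha|S|$ of the sets $\{T_s:s\in S\}$.

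I would build $U\subseteq S$ iteratively. Let $R\subseteq S$ denote the currently uncovered portion (initially $R=S$). At each step, pick $s\in S$ maximizing $|T_s\cap R|$ and add it to $U$. The crucial averaging estimate is the double count
\[ \sum_{s\in S}|T_s\cap R| \;=\; \sum_{i\in R}\phi^\theta_S(i)\;\ge\;\alpha\,|S|\cdot|R|, \]
so some $s\in S$ achieves $|T_s\cap R|\ge\alpha|R|$, shrinking $|R|$ by a factor of at least $(1-\alpha)$ per step.

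After $k$ iterations we therefore have $|R|\le(1-\alpha)^k|S|\le e^{-\alpha k}|S|$. Choosing $k=M=\log(16/\gamma)/\alpha$ (with $\log$ interpreted as natural log; any larger base only strengthens the bound) gives $e^{-\alpha M}=\gamma/16$, hence $|R|\le(\gamma/16)|S|$. Since $S\cap S_1\supseteq\bigcup_{s\in U}T_s = S\setminus R$, we conclude $|S\setminus S_1|\le|R|\le(\gamma/16)|S|$, while $|U|\le M$ by construction.

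No individual step looks like a real obstacle: the argument is essentially the standard greedy set-cover analysis. The only place the structural hypothesis enters is the averaging identity above, which uses only the lower-coverage condition $\phi^\theta_S(i)\ge\alpha|S|$ for $i\in S$ — interestingly, the upper bound $\phi^\theta_S(j)\le\beta|S|$ for $j\notin S$ plays no role, which is why the dependence on $\beta$ enters only through $\gamma=\alpha-\beta$ in the target accuracy $\gamma/16$ rather than in the mechanism of covering.
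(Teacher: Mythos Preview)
Your proposal is correct and follows essentially the same greedy covering argument as the paper: both use the double-count $\sum_{s\in S}|T_s\cap R|=\sum_{i\in R}\phi^\theta_S(i)\ge \alpha|S|\,|R|$ to find an element covering an $\alpha$-fraction of the remaining set at each step, yielding the $(1-\alpha)^M$ decay. Your write-up is in fact slightly more explicit about the averaging identity and correctly observes that only the internal-density condition is used.
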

\begin{proof}
Note that any subset $\tS$ of $S$ receives a total of at
least $\alpha|\tS||S|$ votes from elements of $S$, which implies that
for any such $\tS$ there exists $i_{\tS} \in S$ that votes for at least
$\alpha|\tS|$ members of $\tS$. Given this, we find the desired elements $i_1, \ldots, i_M \in S$ greedily one by one.
Formally, let $S_1=S$. Let  $i_1 \in S$ be an element that votes for at least a $\alpha |S_1|$ elements in $S_1$.
Let $S_2$ be the set $S$ minus the set of elements voted by $i_1$.
In general, at step $l \geq 2$,
there exists $i_l \in S$ that votes by at least a $\alpha$ fraction of $S_l$. Let $S_{l+1}$  be the set $S_l$ minus
the set of elements voted by $i_l$. We clearly have $|S_{i+1}| \leq (1-\alpha)^i |S_1|$, so $|S_{M+1}| \leq (\gamma/16) |S_1|$ for $M ={\log{\left(16/\gamma\right)}}/{\alpha}$. By construction  the set $U=\{i_1, \ldots, i_M \in S\}$ satisfies the desired condition.
\end{proof}

Given Lemma~\ref{multiple}, we can  use the following procedure for generating a list that contains a rough approximation to $S$ which covers at least a $1-\gamma/16$ fraction of $S$ and whose size is at most ${\log{\left(16/\gamma\right)}}|S|$.

\begin{algorithm}
  \caption{Generate rough approximations}
  {\bf Input:}   Preference system $(V, \Pi)$, information $I$
  (parameters $\fracconnect$, $\alpha$, $\beta$, size $t$).
\begin{enumerate}
\item[$\bullet$] Set $\listt=\emptyset$, $\gamma=\alpha-\beta$, $k_1(\fracconnect, \alpha, \gamma)= \log{\left(16/\gamma\right)}/{\alpha}$.
\item[$\bullet$] Exhaustively search over all subsets  $U$ of
$V$ of size $k_1(\fracconnect, \alpha,\gamma)$; for each set $U$  add to the  list $\listt$
the set $S_1 \subseteq V$ of points voted by at least an element in $U$ (i.e., $S_1= \{ i \in V| \exists s \in U, i \in \pi_s(1: \theta t) \}$).
\end{enumerate}
{\bf Output:}  List of sets $\listt$.
\label{alg:selfstablerough}
\end{algorithm}

We now describe a lemma that will be useful for analyzing the purification step, suggesting how we convert a rough approximation to $S$ into a list of candidate much-closer approximations to $S$.
\begin{lemma}
\label{almost}
Fix a $(\fracconnect,\alpha,\beta)$-self-determined  community $S$. Let  $\gamma=\alpha-\beta$, $t=|S|$, and $S_1 \subseteq V$, $|S_1| =M \fracconnect t$ s.t. $| S \setminus S_1| \leq \gamma t/16$.
Let $U$ be a set of $k$ points drawn uniformly at random from $\tilde{S} = S \cap S_1$.
Let  $S_2$ be the  subset of points in $S_1$   that are voted by  at least an
  $\alpha-\gamma/2$ fraction of nodes in $U$, i.e., $S_2 = \{ i \in S_1| \vn^{\fracconnect t}_{U}(i) \geq (\alpha-\gamma/2) |U|\}$.
If $k= 8 \log (32 \fracconnect M/\delta\gamma)/\gamma^2$, then with  probability $ \geq 1-\delta$, we have the symmetric difference $|\Delta(S_2, S)| \leq \gamma t/8$.
\end{lemma}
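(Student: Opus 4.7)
The plan is to show that with high probability $S_2 \cap S_1$ equals exactly $\tilde{S}$. Combined with $S_2 \subseteq S_1$ (by definition of $S_2$) and $\tilde{S} \subseteq S$, this forces $\Delta(S_2, S) = S \setminus \tilde{S} = S \setminus S_1$, whose size is at most $\gamma t/16 \leq \gamma t/8$ by hypothesis. So the whole argument reduces to showing that voting by $U$ correctly separates the ``true'' members $\tilde{S}$ (which must be included in $S_2$) from the ``intruders'' $S_1 \setminus S$ (which must be excluded).

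To that end I would first pin down the two deterministic mean vote-fractions coming from $\tilde{S}$. For $i \in \tilde{S}$, since $\phi^\fracconnect_S(i) \geq \alpha t$ and at most $|S \setminus \tilde{S}| \leq \gamma t/16$ of these voters can lie outside $\tilde{S}$, at least $(\alpha - \gamma/16)t$ members of $\tilde{S}$ vote for $i$; dividing by $|\tilde{S}| \leq t$ gives mean $p_i \geq \alpha - \gamma/16$. For $j \in S_1 \setminus S$, the inclusion $\tilde{S} \subseteq S$ gives at most $\phi^\fracconnect_S(j) \leq \beta t$ voters in $\tilde{S}$; dividing by $|\tilde{S}| \geq (1 - \gamma/16)t$ yields $q_j \leq \beta/(1 - \gamma/16) \leq \beta + \gamma/8$. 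So each $p_i$ sits above the cutoff $\alpha - \gamma/2$ by at least $7\gamma/16$, and each $q_j$ sits below it by at least $3\gamma/8$; the uniform one-sided margin from the threshold is thus at least $3\gamma/8$.

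Next I would apply a standard Hoeffding bound to the $k$ uniform samples in $U$: for any fixed $x \in S_1$, the empirical fraction of $U$ voting for $x$ deviates from its true mean (either $p_x$ or $q_x$) by more than $3\gamma/8$ with probability at most $2\exp(-c k \gamma^2)$ for an absolute constant $c>0$. On the non-deviating event, every $i \in \tilde{S}$ ends up in $S_2$ and every $j \in S_1 \setminus S$ is kept out of $S_2$, so $S_2 \cap S_1 = \tilde{S}$ exactly, which by the opening paragraph is all we need.

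A union bound over the $|S_1| = M\fracconnect t$ candidate elements then yields total failure probability at most $2 M\fracconnect t \exp(-ck \gamma^2)$, which is at most $\delta$ once $k$ is of the stated $O\bigl(\log(M\fracconnect/(\delta \gamma))/\gamma^2\bigr)$ order (up to absolute constants). The main care point, and really the only subtle step, is verifying that both gaps around $\alpha - \gamma/2$ remain a constant multiple of $\gamma$ after the perturbations introduced by the $\gamma t/16$ slack in $|S \setminus S_1|$; once that algebra is in place, concentration plus a union bound finish the proof essentially mechanically.
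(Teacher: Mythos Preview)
Your setup and mean calculations are fine and track the paper exactly: $p_i \ge \alpha-\gamma/16$ for $i\in\tilde S$ and $q_j \le \beta/(1-\gamma/16)\le \alpha-3\gamma/4$ for $j\in S_1\setminus S$, so there is a $\Theta(\gamma)$ margin on either side of the threshold $\alpha-\gamma/2$. The per-element Hoeffding step is also the same.

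The gap is in your final union bound. You take a union over $|S_1|=M\theta t$ elements, which forces
\[
2\,M\theta t\cdot \exp(-c k\gamma^2)\le \delta
\quad\Longrightarrow\quad
k\ \ge\ \frac{1}{c\gamma^2}\log\!\Big(\frac{2M\theta t}{\delta}\Big),
\]
i.e.\ $k$ must grow with $\log t$. The stated value $k=8\log(32\theta M/\delta\gamma)/\gamma^2$ has \emph{no} dependence on $t$, so your argument does not deliver the lemma as written; it only gives a weaker version with an extra $\log t$ in $k$.

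The paper sidesteps exactly this issue by \emph{not} insisting that $S_2=\tilde S$. It sets the per-element Hoeffding failure probability to $\gamma\delta/(16\theta M)$ (which the stated $k$ achieves), and then, instead of union bounding, observes that the \emph{expected} number of misclassified points in $S_1$ is at most
\[
\frac{\gamma\delta}{16\theta M}\cdot |S_1|
=\frac{\gamma\delta}{16\theta M}\cdot M\theta t
=\frac{\gamma\delta t}{16}.
\]
Markov's inequality then gives $|\Delta(S_2,\tilde S)|\le \gamma t/16$ with probability at least $1-\delta$, and combining with $|S\setminus\tilde S|\le \gamma t/16$ yields $|\Delta(S_2,S)|\le \gamma t/8$. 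The expectation-plus-Markov trick is precisely what lets the $t$ factor cancel and keeps $k$ independent of the community size; your union-bound route cannot achieve that with the given $k$.
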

\begin{proof}
We start by showing that the points in $\tilde{S}$ are voted by at least a $\gamma/2$ larger fraction of $\tilde{S}$ than the points in $S_1 \setminus \tilde{S}$.
Let $i \in \tilde{S}$. Since  $S$ is $(\fracconnect,\alpha,\beta)$-self-determined, at least  $\alpha t$ points in $S$ vote for $i$  and since $|S\setminus \tilde{S}| \leq \gamma t/16$ we get that at least $(\alpha-\gamma/16)t$ points in $\tilde{S}$ vote for $i$. Since $|\tilde{S}| \leq t$,
we obtain that at least a  $\alpha-\gamma/16$ fraction of points in $\tilde{S}$ vote for $i$.
Let $j$ be a point in $S_1 \setminus S$. We know that at most $\beta t$ points in $\tilde{S}$ vote for $j$ and since $|\tilde{S}| \geq (1-\gamma/16) t$, we have that at most a $\alpha-3 \gamma/4$ fraction of points in $\tilde{S}$ vote for $j$.

Fix $i \in S_1$. By Hoeffding's inequality, since $U$ is a set of $8\log (32 \fracconnect M/\delta\gamma)/\gamma^2$ points drawn uniformly at random from $\tilde{S}$ we have that
with probability  at least $1- \gamma \delta/(16 \fracconnect M) $ the fraction of points in $\tilde{S}$ that vote for $i$ is within $\gamma/4$ of the fraction of points in $U$ that vote for $i$.
These together with the above observations imply that the expected size of $|\Delta(S_2, \tilde{S})|$ is $(\gamma \delta/(16 \fracconnect M )) \fracconnect M t= \gamma \delta t/16 .$
By Markov's inequality we obtain that there is at most a $\delta$ chance that $|\Delta(S_2, \tilde{S})| \geq \gamma t/16$.
Using the fact $|\tilde{S} \setminus S| \leq \gamma t/16$ we finally get that with probability $1-\delta$
we have $|\Delta(S_2, S)| \leq \gamma t/8$. 
\end{proof}

\begin{algorithm}
  \caption{Purification procedure}
  {\bf Input:}   Preference system $(V, \Pi)$, information $I$ (parameters $\fracconnect$,
  $\alpha$, $\beta$, $\gamma$, $k_2(\fracconnect, \alpha,\gamma)$, $N_2(\fracconnect, \alpha,\gamma)$, size $t$),
  list of rough approximations $\listt_1$.
\begin{enumerate}
\item[$\bullet$] For each element $S_1 \in \listt_1$, repeat $N_2(\fracconnect, \alpha,\gamma)$ times
  \begin{enumerate}
  \setlength{\itemindent}{-8mm}   
\setlength{\itemsep}{-1mm}      
  \item[$\bullet$]  Sample a set $U_2$ of $k_2(\fracconnect, \alpha, \gamma)$ points at random from $S_1$.  Let $S_2 = \{ i \in S_1| \vn^{\fracconnect t}_{U_2}(i) \geq (\alpha-\gamma/2) |U_2|\}$.
  \item[$\bullet$] Let $S_3 = \{ i \in V |\vn^{\fracconnect t}_{S_2}(i) \geq (\alpha-\gamma/2) |S_2|\}$.
     Add $S_3$ to the list $\listt$.
  \end{enumerate}
  \end{enumerate}
{\bf Output:}  List of sets $\listt$.
\label{alg:selfstablepurify}
\end{algorithm}

We now show how Lemmas~\ref{multiple} and~\ref{almost} can be used to identify and enumerate communities.
\begin{theorem}
\label{main}
Fix a $(\fracconnect,\alpha,\beta)$-self-determined  community $S$. Let
$\gamma=\alpha-\beta$, $k_1(\fracconnect, \alpha, \gamma)= \log{\left(16/\gamma\right)}/{\alpha}$,
 $k_2 (\fracconnect, \alpha, \gamma)= \frac{8}{\gamma^2} \log{\left(\frac{32 \fracconnect k_1}{\gamma \delta}\right)}$,
  $N_2(\fracconnect, \alpha, \gamma)=O({(\fracconnect k_1)}^{k_2} \log{(1/\delta)})$.
  Using Algorithm~\ref{alg:selfstablerough} together with Algorithm~\ref{alg:selfstablepurify} for steps (1) and (2) of
Algorithm~\ref{alg:selfstable}, 
  we have that with probability $\geq 1-\delta$ one of the elements in the list $\listt$ we output is {\em identical} to $S$.
\end{theorem}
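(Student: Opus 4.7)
The plan is to combine Lemmas~\ref{multiple} and~\ref{almost} by a three-layer ``denoising'' chain: an exhaustive $k_1$-set enumeration to locate a rough approximation $S_1$ that nearly contains $S$; a randomized subsampling inside $S_1$ to obtain a clean set $S_2$ whose symmetric difference with $S$ is small; and a deterministic vote-threshold step to promote $S_2$ to $S_3 = S$ exactly.

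First I would show that Algorithm~\ref{alg:selfstablerough} outputs a list $\listt_1$ containing a favorable $S_1$. By Lemma~\ref{multiple}, there is a set $U \subseteq S$ of size at most $k_1$ whose union of top-$\fracconnect t$ votes covers all but $\gamma t/16$ of $S$. Since $k_1 = \lfloor \log(16/\gamma)/\alpha \rfloor$ is the search radius of Algorithm~\ref{alg:selfstablerough}, this $U$ is tested, producing an $S_1$ with $|S \setminus S_1| \le \gamma t/16$ and $|S_1| \le \fracconnect k_1 t$. Fix this $S_1$ for the rest of the argument and let $\tS = S \cap S_1$, so that $|\tS| \ge (1-\gamma/16)t$.

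Next I would analyze a single iteration of Algorithm~\ref{alg:selfstablepurify} applied to $S_1$. The sample $U_2$ is uniform over $S_1$, but Lemma~\ref{almost} requires it be uniform over $\tS$. The fix is to condition on the event $\{U_2 \subseteq \tS\}$: each draw lands in $\tS$ with probability $|\tS|/|S_1| \ge (1-\gamma/16)/(\fracconnect k_1)$, so $\Pr[U_2 \subseteq \tS] \ge (2\fracconnect k_1)^{-k_2}$ (absorbing the $1-\gamma/16$ factor). Conditioned on this event, $U_2$ is uniform on $\tS$, and Lemma~\ref{almost} (with failure probability $\delta' \le \delta/2$ and with $k_2 = (8/\gamma^2)\log(32 \fracconnect k_1/(\gamma \delta'))$) yields $|\Delta(S_2, S)| \le \gamma t/8$. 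Taking $N_2 = O((\fracconnect k_1)^{k_2} \log(1/\delta))$ independent iterations, a standard Chernoff/union bound ensures that with probability at least $1-\delta/2$ at least one iteration satisfies both the inclusion event and the Lemma~\ref{almost} conclusion.

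Finally I would verify that the second threshold step deterministically recovers $S$. Given $|\Delta(S_2, S)| \le \gamma t/8$ (so $|S_2| \in [(1-\gamma/8)t,\,(1+\gamma/8)t]$), for any $i \in S$ we have $\vn^{\fracconnect t}_{S_2}(i) \ge \alpha t - \gamma t/8 \ge (\alpha-3\gamma/8)t \ge (\alpha-\gamma/2)|S_2|$, and for any $j \notin S$, $\vn^{\fracconnect t}_{S_2}(j) \le \beta t + \gamma t/8 = (\alpha - 7\gamma/8)t < (\alpha-\gamma/2)(1-\gamma/8)t \le (\alpha-\gamma/2)|S_2|$. Hence $S_3 = S$ and the pair $(S_1,U_2)$ contributes $S$ to $\listt$. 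The main obstacle in this plan is the mismatch between sampling from $S_1$ and Lemma~\ref{almost}'s requirement of sampling from $\tS$; this is exactly why $N_2$ must be exponential in $k_2$, and the bookkeeping of the various $\delta$-budgets through the three layers requires some care so that the slack $\gamma t/16 \to \gamma t/8$ in symmetric difference leaves enough room for the final deterministic thresholding to separate $S$ from $V\setminus S$.
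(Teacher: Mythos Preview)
Your proposal is correct and follows essentially the same approach as the paper: the paper's proof invokes Lemma~\ref{multiple} to guarantee that the exhaustive $k_1$-search produces a good $S_1$, then uses exactly your conditioning argument (that $U_2\subseteq\tilde S$ with probability at least $(2\fracconnect k_1)^{-k_2}$, hence $N_2=O((\fracconnect k_1)^{k_2}\log(1/\delta))$ repetitions suffice) together with Lemma~\ref{almost} to get $|\Delta(S_2,S)|\le\gamma t/8$, and finally states ``a simple calculation shows that $S_3=S$'' for your third step. Your write-up in fact spells out that final thresholding calculation explicitly, which the paper leaves to the reader.
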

\begin{proof}
Since when running Algorithm~\ref{alg:selfstablerough} we  search over all subsets of $U$ of $V$ of size
$k_1(\fracconnect, \alpha,\gamma)$,
by Lemma~\ref{multiple} in one of the rounds we find a set $U$ s.t. the set of points $S_1$  that are voted by at least an element in
$U$ cover a $1-\gamma/16$ fraction of $S$. So, $\listt_1$ contains a rough approximation to $S$.

Since $|S|=t$,  $U_2$ is a set of $k_2$ elements drawn at random from $\tilde{S}=S \cap S_1$ with probability $\geq (t/ (2 t \fracconnect k_1) )^{k_2}$. Therefore for $N_2=O((2 \fracconnect k_1)^{k_2} \log(1/\delta)) $,  with probability
$\geq 1-\delta/2$ in one of the rounds the set $U_2$ is a set
of $k_2$ elements drawn at random from $\tilde{S}$. In such a round, by Lemma~\ref{almost},
with probability  $\geq 1-\delta/2$ we get  a set $S_2$ such that $|\Delta(S_2, S)| \leq \gamma t/8$.
 A simple calculation  shows that $S_3=S$.
\end{proof}

\begin{corollary}
\label{number_communities}
The number of $(\fracconnect,\alpha,\beta)$-self-determined communities  in an affinity system $(V,\Pi)$ satisfies
 $B(n)=n^{ O\left(\log{\left(1/\gamma\right)}/{\alpha} \right)}  \left(\frac{\fracconnect  \log{\left(1/\gamma\right)}}{\alpha}\right)^{O\left(\frac{1}{\gamma^2} \log{\left(\frac{ \fracconnect \log{\left(1/\gamma\right)}}{\alpha \gamma}\right)}\right)}$ and with probability $\geq 1-1/n$  we can find all of them in time $B(n) \poly(n)$.
\end{corollary}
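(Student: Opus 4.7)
The plan is to run Theorem~\ref{main}'s algorithm once for each candidate community size and union-bound over the resulting failure events. Specifically, for each $t \in \{1,\ldots,n\}$, I run Algorithm~\ref{alg:selfstable} (using Algorithms~\ref{alg:selfstablerough} and~\ref{alg:selfstablepurify} for its first two steps) with size parameter $t$ and a polynomially small failure probability $\delta = 1/n^c$, the constant $c$ to be chosen below; step~3 of Algorithm~\ref{alg:selfstable} tests each candidate against Definition~\ref{prop:self-stable} in $\poly(n)$ time, so only genuine $(\fracconnect,\alpha,\beta)$-self-determined communities survive in the final list $\listt$. By Theorem~\ref{main}, for each fixed community $S$ of size $t$, the round-$t$ run places $S$ in $\listt$ with probability at least $1-\delta$.

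For running time and output size, note that round $t$ produces a raw list of size at most $\binom{n}{k_1}\cdot N_2$. Substituting $k_1 = O(\log(1/\gamma)/\alpha)$, $k_2 = O((1/\gamma^2)\log(\fracconnect k_1/(\gamma\delta)))$, and $N_2 = O((\fracconnect k_1)^{k_2}\log(1/\delta))$, and summing over all $t$, the total size matches $B(n)$ after absorbing the extra factor of $n$ for iterating over $t$ and the $\log(1/\delta)=O(\log n)$ contributions into the exponents of $B(n)$ (possible because $k_1,k_2\ge 1$ and $\delta=1/\poly(n)$). Each candidate is verified in $\poly(n)$ time, so the overall running time is $B(n)\,\poly(n)$.

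For correctness and the cardinality bound on the number of communities $N$, I first establish a coarse polynomial bound on $N$ to break the mild circular-looking dependence between $\delta$ and $B(n)$. Applying Lemmas~\ref{multiple} and~\ref{almost} with $\delta=1/2$, every community $S$ admits a witness pair $(U,U_2)$ of sizes $k_1$ and $k_2$ such that the deterministic portion of Algorithm~\ref{alg:selfstablepurify} maps $(U,U_2)$ to the set $S_3 = S$; since this map is single-valued, $N \leq \binom{n}{k_1}\binom{n}{k_2} = \poly(n)$ for any fixed parameters $\fracconnect,\alpha,\beta$. Now I choose $c$ so that $n \cdot N \cdot n^{-c} \leq 1/n$; a union bound over the $n$ rounds and all $N$ communities yields total failure probability at most $1/n$, so with probability $\geq 1-1/n$ the verified list $\listt$ equals the collection of all $(\fracconnect,\alpha,\beta)$-self-determined communities. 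Because $|\listt|$ is deterministically at most the algorithm's output-size bound $B(n)$, and because with positive probability $|\listt| = N$, the sharper bound $N \leq B(n)$ follows.

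The main subtlety I anticipate is exactly the circular-looking interaction between $\delta$ and $B(n)$: $\delta$ enters $B(n)$ through $k_2$ and $N_2$, yet the choice of $\delta$ is driven by a union bound whose size depends on $N \leq B(n)$. Untangling this requires the coarse preliminary bound on $N$ described above; once that is in hand, pinning $\delta$ to $1/n^c$ makes $\log(1/\delta)=O(\log n)$ benign and routinely absorbable into the stated expression for $B(n)$, and the rest reduces to bookkeeping of constants in the exponents.
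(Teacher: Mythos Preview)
Your argument has a real quantitative gap in the ``absorption'' step. When you set $\delta=1/n^{c}$, the parameter $k_2=\frac{8}{\gamma^2}\log\!\bigl(\tfrac{32\theta k_1}{\gamma\delta}\bigr)$ grows by an additive $\Theta(c\log n/\gamma^2)$, and consequently $N_2=(\theta k_1)^{k_2}\log(1/\delta)$ picks up an extra multiplicative factor of
\[
(\theta k_1)^{\Theta(c\log n/\gamma^2)} \;=\; n^{\Theta\!\left(c\,\log(\theta k_1)/\gamma^2\right)}.
\]
This is an additional power of $n$ whose exponent scales like $1/\gamma^{2}$ (and also like $c$, which via your coarse bound is itself $\Theta(k_1+k_2)$). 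It cannot be absorbed into the stated $B(n)$: the only $n$-power in $B(n)$ has exponent $O(\log(1/\gamma)/\alpha)$, and the second factor of $B(n)$ has an exponent that is \emph{independent of $n$}. So both your list-size (hence running-time) bound and your derived cardinality bound $N\le|\listt|$ are strictly weaker than the corollary claims, not merely off by $\poly(n)$.

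The paper sidesteps this entirely by decoupling the count bound from the success probability. It fixes $\delta=1/2$, so $k_2$ and $N_2$ have no $n$-dependence, and argues by linearity of expectation: for each size $t$, every community of that size lands in the list with probability $\ge 1/2$, so the number of size-$t$ communities is at most twice the (deterministic) list length $N_1(1/2)N_2(1/2)$; summing over $t$ gives $B(n)$. For the algorithmic guarantee it then simply repeats the $\delta=1/2$ run $O(\log(nB(n)))$ times and union-bounds over the (now bounded) set of communities; the repetition factor is $\poly(n)$ and is the only place $n$ re-enters. Your two-stage plan (coarse witness bound, then tiny $\delta$) is not needed and is what introduces the unwanted $n^{\Theta(1/\gamma^2)}$ blow-up.
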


We note that Theorem~\ref{main} and Corollary~\ref{number_communities} apply even if some nodes do not list all members of $V$
in their preference lists, and then some nodes in a community $S$ have  fewer than $\fracconnect|S|$ votes in total.
If $\fracconnect$, $\alpha$, and $\beta$ are constant, then
Corollary~\ref{number_communities} shows that the number of communities is $O\left(n^{\log{\left(1/\gamma\right)}/{\alpha}} \right)$
which is {\em polynomial} in $n$ and they can be found in polynomial time.  We can show that the dependence on $n^{1/\alpha}$ is necessary: 

\begin{theorem}
\label{lb}
For any constant $\fracconnect \geq 1$, for any $\alpha \geq 2\sqrt{\fracconnect}/n^{1/4}$,
there exists an instance such that the number of $(\fracconnect,\alpha,\beta)$-self-determined communities
 with  $\alpha-\beta=\gamma=\alpha/2$ is $n^{\Omega({1/\alpha})}$.
\end{theorem}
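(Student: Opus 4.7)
My plan is a ``blow-up'' construction together with a probabilistic argument. Let $s=\Theta(1/\alpha)$ and $m=\Theta(\fracconnect/\alpha^{2})$ be integers with precise constants chosen below. Partition $V$ into $m$ blocks $G_{1},\dots,G_{m}$ of equal size $n/m$. For each $v\in G_{\ell}$ define its preference list so that (i) the top $n/m-1$ positions list the other members of $G_{\ell}$, (ii) the next $\fracconnect t-(n/m-1)$ positions (where $t:=sn/m$ is the intended community size) are a uniformly random subset of $V\setminus G_{\ell}$, drawn independently across vertices, and (iii) the remaining positions are filled arbitrarily. The candidate communities are the $\binom{m}{s}$ sets $S_{I}:=\bigcup_{\ell\in I}G_{\ell}$ for $I\in\binom{[m]}{s}$; each has size exactly $t$. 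I will show that with positive probability over the random preferences, at least $\binom{m}{s}/2$ of these candidates are $(\fracconnect,\alpha,\beta)$-self-determined, yielding $n^{\Omega(1/\alpha)}$ communities via the estimate $\binom{m}{s}\geq(m/s)^{s}$.

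The first step is computing the expected vote counts. For $i\in S_{I}$, $\phi^{\fracconnect}_{S_{I}}(i)$ splits as a deterministic own-block contribution $n/m-1$ plus a sum of $(s-1)n/m$ independent Bernoullis with success probability $\approx\fracconnect s/m$, giving $\E[\phi^{\fracconnect}_{S_{I}}(i)]\approx n/m+\fracconnect s(s-1)n/m^{2}$; for $j\notin S_{I}$ only the external-vote part survives, so $\E[\phi^{\fracconnect}_{S_{I}}(j)]\approx\fracconnect s^{2}n/m^{2}$. Setting $m=C\fracconnect s^{2}$ for a sufficiently large absolute constant $C$ (say $C=20$) and picking $s$ to be any integer in the interval $[2/(C\alpha),(1+1/C)/\alpha]$ makes $\E[\phi^{\fracconnect}_{S_{I}}(i)]/t\geq(1+\varepsilon)\alpha$ and $\E[\phi^{\fracconnect}_{S_{I}}(j)]/t\leq(1-\varepsilon)\beta$ for an absolute constant $\varepsilon>0$, so both threshold conditions hold strictly in expectation with additive slack $\Omega(\alpha t)$. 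The subtle point is that the ``natural'' choice $m=\fracconnect s^{2}$ places both expectations exactly on the thresholds $\alpha t$ and $\beta t=\alpha t/2$, leaving no slack for concentration; enlarging $m$ by a constant factor decreases the non-member expectation faster than the member expectation and opens the needed margin.

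The final step is Hoeffding concentration. Since the external vote vectors are independent across vertices, $\phi^{\fracconnect}_{S_{I}}(i)$ and $\phi^{\fracconnect}_{S_{I}}(j)$ are each sums of $O(t)$ independent Bernoullis, so the probability of an $\Omega(\alpha t)$ deviation from the mean is at most $\exp(-\Omega(\alpha^{2}t))=\exp(-\Omega(\alpha^{3}n/\fracconnect))$. The hypothesis $\alpha\geq 2\sqrt{\fracconnect}/n^{1/4}$ forces this exponent to be $\Omega(\sqrt{\fracconnect}\,n^{1/4})$, so a union bound over all $n$ vertices still gives failure probability $o(1)$ per candidate. Therefore $\Pr[S_{I}\text{ is a }(\fracconnect,\alpha,\beta)\text{-self-determined community}]\geq 1/2$ for every $I$, and by linearity of expectation some realization of the preferences simultaneously witnesses at least $\binom{m}{s}/2$ communities; substituting the parameters into $\binom{m}{s}\geq(m/s)^{s}=(C\fracconnect s)^{s}$ with $s=\Theta(1/\alpha)$ gives the claimed $n^{\Omega(1/\alpha)}$ bound at the boundary of the admissible range.
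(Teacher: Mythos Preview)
Your construction and the paper's are essentially the same idea (partition into equal blocks, put own block first and the rest at random, take unions of $s\approx 1/\alpha$ blocks as candidate communities), but your parameter choice breaks the final count. You set $m=C\fracconnect s^{2}=\Theta(\fracconnect/\alpha^{2})$, which is \emph{independent of $n$}. Consequently $\binom{m}{s}\le (C\fracconnect s)^{s}=(C\fracconnect/\alpha)^{\Theta(1/\alpha)}$ is also independent of $n$, and the conclusion $n^{\Omega(1/\alpha)}$ simply fails whenever $\alpha$ is not at the extreme boundary of the allowed range. For example, if $\alpha$ is a fixed constant and $n$ is large, the theorem promises $n^{\Omega(1)}$ communities, but your instance has only $O(1)$ candidate sets in total. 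Your own closing remark (``at the boundary of the admissible range'') concedes this, but the theorem is stated for all $\alpha\ge 2\sqrt{\fracconnect}/n^{1/4}$, not just the extremal value.

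The fix is to let the number of blocks grow with $n$. The paper takes $m=\sqrt{n}$ blocks of size $\sqrt{n}$; then with $s=1/\alpha$ the hypothesis $\alpha\ge 2\sqrt{\fracconnect}/n^{1/4}$ is exactly the statement $m=\sqrt{n}\ge 4\fracconnect/\alpha^{2}$, i.e.\ your lower bound on $m$ is satisfied, so the expectation and concentration analysis you already wrote still go through. But now $\binom{m}{s}\ge(\sqrt{n}\,\alpha)^{1/\alpha}\ge (2\sqrt{\fracconnect}\,n^{1/4})^{1/\alpha}=n^{\Omega(1/\alpha)}$ for every $\alpha$ in the range. Two further simplifications come for free with this choice: the own-block contribution is exactly $\sqrt{n}=\alpha t$, so the member condition holds \emph{deterministically} and you only need concentration for non-members; and because the deviation scale is $\Theta(\sqrt{n})$ rather than $\Theta(\alpha t)$, the tail bound is strong enough to union-bound over all $\binom{\sqrt{n}}{s}\le n^{s/2}$ candidates simultaneously, giving a single instance in which \emph{every} union of $s$ blocks is a community (your linearity-of-expectation detour, while valid, is then unnecessary).
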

\begin{sketch}
Consider $L=\sqrt{n}$ blobs $B_1$, ..., $B_L$ each of size $\sqrt{n}$. 
Assume that each point ranks the points inside its blob first (in an arbitrary order)
and it then ranks the points outside its blob randomly.
One can show that with non-zero probability for $l \leq n^{1/4}/(2\sqrt{\fracconnect})$ any union of  $l$ blobs  satisfies
the $(\fracconnect,\alpha,\beta)$-self-stability property with parameters $\alpha=1/l$ and $\gamma=\alpha/2$. Full details
 appear in Appendix~\ref{additional}.
\end{sketch}

\subsection{ Self-determined Communities in Weighted Affinity Systems}
\label{sec:wtd}

We provide here a simple efficient reduction from the weighted case to the non-weighted case.
\begin{theorem}
\label{reductionweighted}
Given a weighted affinity system $(V,\W)$,  $\fracconnect$, $\alpha$, $\beta$,  $\epsilon < \alpha$, and a community size $t$, there is an efficient
  procedure that constructs a non-weighted instance $(V',\Pi)$
along with a mapping $f$ from $V'$ to $V$, s.t.  for any
$(\fracconnect,\alpha,\beta)$ community $S$ in $V$  there exists a
$(\fracconnect,\alpha-\epsilon,\beta)$ community $S'$ in $(V',\Pi)$
with $f(S')=S$.
\end{theorem}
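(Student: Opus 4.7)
The plan is to reduce the weighted case to the unweighted one by creating $N$ copies of each element of $V$, for a sufficiently large $N = \lceil C \max(n/\epsilon, 1/\beta) \rceil$. Set $V' = V \times [N]$ with $f(j, \ell) = j$. For each $i \in V$, I would discretize the affinities via $\tilde a_{i,j} = \lfloor a_{i,j} N \rfloor$, and for each copy $(i, k)$ of $i$ I would construct the ranking $\pi_{(i,k)}$ on $V'$ in two stages: first list the ``active'' copies in blocks sorted in decreasing order of $a_{i,j}$, where the block for $j$ in $(i,k)$'s ranking consists of $\tilde a_{i,j}$ consecutive copies of $j$ starting at $(j, (k \bmod N) + 1)$ and wrapping cyclically (so different $k$'s pick different copies of $j$); then list the remaining ``inactive'' copies at the bottom in any fixed order. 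The cyclic shift is what balances the voting load across the $N$ copies of each $j$.

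For a $(\fracconnect, \alpha, \beta)$ weighted community $S$ of size $t$, the natural candidate is $S' = S \times [N]$ of size $Nt$, and I would show $S'$ is a $(\fracconnect, \alpha - \epsilon, \beta)$ self-determined community in $(V', \Pi)$. The key observation is that the top $\fracconnect Nt = \fracconnect |S'|$ of $\pi_{(i,k)}$ faithfully simulates $i$'s weighted prefix with budget $\fracconnect t$, rescaled by $N$ and discretized at granularity $1/N$. If $c_{i,j}$ denotes the number of $j$-copies appearing in $(i,k)$'s top (independent of $k$ by the cyclic construction), then $c_{i,j}$ differs from $N\,\w_{i,j}^{\fracconnect t}$ by at most $1$ per element plus a single $O(1)$ boundary effect per $i$. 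The cyclic shift ensures that copy $(j, \ell)$ receives exactly $\sum_{i \in S} c_{i,j}$ votes from $S'$, which lies within $O(t)$ of $N \w_S^{\fracconnect}(j)$. As a fraction of $|S'| = Nt$, this is $\w_S^{\fracconnect}(j)/t \pm O(1/N)$, giving $\geq \alpha - \epsilon$ for $j \in S$ and $\leq \beta$ for $j \notin S$ (the latter relying on flooring in the construction so that $c_{i,j} \leq N\,\w_{i,j}^{\fracconnect t}$).

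The hardest part will be reconciling two competing requirements on the per-member budget: flooring each $c_{i,j}$ protects the $\beta$ bound but causes $\sum_j c_{i,j}$ to fall short of the required $\fracconnect Nt$ by up to $n$, and these ``spillover'' votes must be placed somewhere without inflating the non-community bound beyond $\beta Nt$. My plan is to append a small pool of dummy members to $V'$ sitting just above the inactive copies in every ranking, so that the top $\fracconnect Nt$ absorbs spillover into dummies rather than into non-community members of $V \times [N]$; since the total spillover received across dummies is at most $n |S| = nt$, the dummies each receive at most $\beta Nt$ votes provided $N \geq n/(\beta t)$, which is guaranteed by the choice of $N$. Alternatively, one can distribute the $+1$ round-up preferentially among elements $j$ with $\w_{i,j}^{\fracconnect t} > 0$ and largest fractional part of $N\w_{i,j}^{\fracconnect t}$, keeping $c_{i,j} \leq N\w_{i,j}^{\fracconnect t} + 1$ and yielding vote to $j \notin S$ of at most $(\beta + 1/N)Nt$; rerunning the reduction with $\epsilon/2$ in place of $\epsilon$ then absorbs the slack. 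The remaining steps are bookkeeping: verifying the $O(1)$ per-element error and summing over the $|S| = t$ members of $S$ to get total slack $O(t)$, which is $O(\epsilon Nt)$ under our choice of $N$.
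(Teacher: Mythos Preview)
Your core idea---replicate each node and use a regular bipartite structure (your cyclic shift) so that every copy of $j$ receives the same number of votes from the copies of $i$---is exactly the paper's construction. The paper phrases this as connecting the blob $B_s$ to the blob $B_{\tilde s}$ by a $\lfloor pk\rfloor$-regular bipartite graph $G_{k,k,\lfloor pk\rfloor}$, where $p=\aff_{s,\tilde s}^{\fracconnect t}$, and then observes directly that a copy of $j\in S$ receives at least $\sum_{i\in S}\lfloor p_{i,j}k\rfloor \ge \alpha tk - t = kt(\alpha-\epsilon)$ votes while a copy of $j\notin S$ receives at most $\sum_{i\in S}\lfloor p_{i,j}k\rfloor \le \beta tk$.

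Where you diverge is in worrying about filling out the full $\fracconnect Nt$ top slots of each ranking. The paper simply does not do this: it lets each new node rank only $\sum_j \lfloor p_{i,j}k\rfloor \le \fracconnect kt$ elements and ``have no opinion about the rest,'' invoking the remark (after Corollary~\ref{number_communities}) that all the algorithms and bounds go through when some nodes cast fewer than $\fracconnect|S|$ votes. This removes your spillover problem entirely and lets the paper take $k=1/\epsilon$, independent of $n$. Your dummy analysis is also not quite right as stated: each of the $Nt$ voters in $S'$ can spill up to $n$ votes, so the total spillover into the dummy pool is $O(nNt)$, not $nt$; balancing that across dummies needs $\Omega(n/\beta)$ dummies (and its own cyclic shift), not merely $N\ge n/(\beta t)$. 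Your alternative (round up on positive-weight entries and absorb the extra $1/N$ with an $\epsilon/2$ budget) is the cleaner fix, but the paper's partial-ranking shortcut is simpler still.
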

\begin{proof}
Given   the original weighted instance $(V,\W)$, we construct a non-weighted instance $(V',\Pi)$  as follows.
For each  $\s \in V$, we create a blob $B_{\s}$ of $k$ nodes in $V'$. For any $\s, \ts \in V$,  if $p$ is the weight $\aff_{\s,\ts}^{\fracconnect t}$
with which $\s$ votes for $\ts$, we
connect $B_{\s}$ to $B_{\ts}$ with $G_{k,k, \lfloor pk \rfloor }$, where $G_{k,k, \lfloor pk \rfloor }$ is a bipartite graph with $k$ nodes on the left
and $k$ nodes on the right such that each edge on the left has out-degree $\lfloor pk \rfloor$ and each node on the right has in-degree $\lfloor pk \rfloor$.
 Clearly all nodes in $V'$ rank  at most $k|S|\fracconnect$ other nodes
 (and do not have an opinion about the rest).
Let $k=1/\epsilon$.
Consider a community $S$ in $(V,\W)$. For any $\s \in S$ and for each node in $i \in B_{\s}$ the total vote from nodes in $B_{\ts}$ for $\ts \in S$
(when evaluating whether  $\cup_{\ts \in S}B_{\ts}$ is a good community or not) is at least
 $\alpha |S| k - |S| \geq k|S| (\alpha -\epsilon)$. Moreover, for $\s \notin S$ and for each node in $B_{\s}$ we have the total vote
  from  the nodes in $B_{\ts}$ for $\ts \in S$
is at most $\beta|S| k$.
Therefore $\cup_{\ts \in S}B_{\ts}$  is a legal
$(\fracconnect, \alpha-\epsilon, \beta)$-self-determined community of size $kt$ in the non-weighted instance $(V',\W)$.
\end{proof}

Using this reduction we immediately get the following result:

\begin{theorem}
\label{self-stable:weighted}
For any $\fracconnect$, $\alpha$, $\beta$, $\gamma=\alpha-\beta$,
the number of weighted $(\fracconnect,\alpha,\beta)$-self-determined communities is
 $B(n)=(n/\gamma)^{ O\left(\log{\left(1/\gamma\right)}/{\alpha} \right)}  \left(\frac{2\fracconnect  \log{\left(1/\gamma\right)}}{\alpha}\right)^{O\left(\frac{1}{\gamma^2} \log{\left(\frac{ \fracconnect \log{\left(1/\gamma\right)}}{\alpha \gamma}\right)}\right)}$
  and we can find them in time $B(n) \poly(n).$
\end{theorem}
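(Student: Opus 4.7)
The plan is to reduce the weighted problem to the unweighted one by invoking Theorem~\ref{reductionweighted}, and then to apply Corollary~\ref{number_communities} to the reduced unweighted instance. First I would choose the approximation parameter in the reduction to be $\epsilon = \gamma/2$, producing an unweighted instance $(V',\Pi)$ with $|V'| = n/\epsilon = 2n/\gamma$ nodes in which every $(\fracconnect,\alpha,\beta)$ weighted community $S \subseteq V$ corresponds under $f$ to a $(\fracconnect,\alpha-\gamma/2,\beta)$ community $S' = \cup_{s\in S}B_s$ in $V'$. Since $\beta\geq 0$ forces $\alpha\geq \gamma$, we have $\alpha' := \alpha-\gamma/2 \geq \alpha/2$, and $\gamma' := \alpha'-\beta = \gamma/2$. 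The reduction has to be applied once per candidate community size $t=1,\dots,n$, since the capped affinities $\aff_{s,\ts}^{\fracconnect t}$ depend on $t$; only an extra $\poly(n)$ factor arises from this loop.

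Next I would invoke Corollary~\ref{number_communities} on $(V',\Pi)$ with parameters $\alpha',\beta,\gamma'$. Plugging $n' = 2n/\gamma$, $\alpha'\geq \alpha/2$, and $\gamma' = \gamma/2$ into the bound of the corollary gives
\[
(n')^{O(\log(1/\gamma')/\alpha')} \;=\; (n/\gamma)^{O(\log(1/\gamma)/\alpha)},
\]
and an entirely analogous substitution turns the second factor into $(2\fracconnect \log(1/\gamma)/\alpha)^{O((1/\gamma^2)\log(\fracconnect\log(1/\gamma)/(\alpha\gamma)))}$, exactly matching the stated $B(n)$. Because the map $S \mapsto \cup_{s\in S}B_s$ is injective, the count of weighted $(\fracconnect,\alpha,\beta)$ communities is bounded by the count of unweighted $(\fracconnect,\alpha',\beta)$ communities, giving the claimed bound on $B(n)$.

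For the algorithmic half, I would run the enumeration procedure from Corollary~\ref{number_communities} (Algorithms \ref{alg:selfstablerough}--\ref{alg:selfstablepurify} composed with \ref{alg:selfstable}) on each $(V',\Pi)$ built for a target size $t$, producing a list that, with probability at least $1-1/n$, contains every $(\fracconnect,\alpha-\gamma/2,\beta)$ community of $V'$. I would then post-process this list by keeping only those candidates $S'$ that happen to be unions of blocks $B_s$, mapping each such $S'$ back to $f(S') \subseteq V$, and finally verifying directly from Definition~\ref{prop:self-stable:weighted} that $f(S')$ is a genuine weighted $(\fracconnect,\alpha,\beta)$ community. The verification is polynomial per candidate (as already noted after Theorem~\ref{main}), so the overall running time is $B(n)\poly(n)$ as required.

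The only delicate step is the parameter bookkeeping in the bound calculation: one must confirm that substituting $n'=2n/\gamma$, $\alpha'\geq\alpha/2$, and $\gamma'=\gamma/2$ into Corollary~\ref{number_communities} yields constants that still sit inside the $O(\cdot)$ of the target expression for $B(n)$, in particular that the inequality $\alpha\geq\gamma$ is what keeps $\log(1/\gamma')/\alpha'$ within a constant factor of $\log(1/\gamma)/\alpha$. Once this is checked, the theorem follows as an essentially direct composition of Theorem~\ref{reductionweighted} with Corollary~\ref{number_communities}.
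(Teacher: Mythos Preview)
Your proposal is correct and follows essentially the same approach as the paper: apply the reduction of Theorem~\ref{reductionweighted} with $\epsilon=\gamma/2$, obtaining an unweighted instance on $2n/\gamma$ vertices, invoke Corollary~\ref{number_communities} there, and then verify each candidate against the original weighted definition. You have in fact supplied more detail than the paper does---the observation that $\alpha\ge\gamma$ forces $\alpha'\ge\alpha/2$, the per-$t$ loop for the reduction, and the injectivity of $S\mapsto\bigcup_{s\in S}B_s$---all of which the paper leaves implicit.
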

\begin{proof}
We perform the reduction in Theorem~\ref{reductionweighted} with $\epsilon=\gamma/2$ and use the algorithm in Theorem~\ref{main}
and the bound in Corollary~\ref{number_communities}.
The proof follows from the fact that  the number of vertices in the new instance
has increased by only a factor of $2/\gamma$.
We also note that each set output on the reduced instance
can then be examined on the original weighted affinity system, and kept iff it satisfies the community definition with
original parameters.
\end{proof}


\subsection{ Self-determined Communities in Multi-faceted Affinity Systems}

A multi-faceted affinity system is a system where each node may have more than one
rankings of other nodes.
Suppose that each element $i$ is allowed to have at most $f$ different rankings
$\pi_i^1,\ldots,\pi_i^f$. We say that the pair $(S,\psi)$ is a multi-faceted community where $\psi:S\rightarrow \{1,\ldots,f\}$,
if $S$ is a community where $\psi(i)$ specifies which ranking facet should be used by element $i$.
In other words, as before, let $\phi_{S,\psi}^\ta(i):=|\{s\in S|i\in \pi_s^{\psi(s)}(1:\lceil \ta |S|\rceil)\}|$. Then
$(S,\psi)$ is a $(\al,\be,\ta)$-multifaceted community if for all $i\in S$, $\phi_{S,\psi}^\ta(i)\ge\al |S|$, and for
all $j\notin S$, $\phi_{S,\psi}^\ta(j)<\be |S|$.

We show that for a bounded $f$, even though
 there may be exponentially many functions $\psi$, it is not harder to find multifaceted communities than to find regular communities.
Note that all our sampling
algorithms can be adapted as follows. Once a representative sample $\{i_1,\ldots,i_k\}$ of the  community $S$ is obtained,
we can guess the facets $\psi(i_1),\ldots,\psi(i_k)$ while adding a multiplicative $f^k$ factor to the running time. We can thus
get the set $S_2$ approximating $S$ in the same way as it is found in Algorithms \ref{alg:selfstablerough} and \ref{alg:selfstablepurify}
while adding a multiplicative factor of $f^{k_1+k_2}$ to the running time. We thus obtain a list $\listt$ that for each multi-faceted community $(S,\psi)$
contains a set $S_2$ such that $\Delta(S_2,S)<\ga t/8$. Given $S_2$ we can output $S$ with probability $>f^{-8 \log n / \ga^2}/2$:
 guess a set $U_2$ of $m=8\log n /\ga^2$ points in $S_2$; guess a function $\psi_2$ on $U_2$; output
$S=$ the set of points that receive at least $(\al-\ga/2)t$ votes according to $(U_2,\psi_2)$.  Moreover,
a facet structure $\psi'$ can be recovered on $S$ so that $(S,\psi')$ is an $(\al-\ga/4,\be+\ga/4,\ta)$-multifaceted community using
a combination of linear programming and sampling. Details appear in Appendix~\ref{appendix:facet}.

\begin{theorem}
\label{thm:mutlf}
Let $S$ be an $f$-faceted $(\al,\be,\ta)$-community. Then there is an algorithm that runs in $O(n^2)$ time and outputs $S$, as well as a facet structure $\psi'$ on $S$ such that
$(S,\psi')$ is an $(\al-\ga/4,\be+\ga/4,\ta)$-multifaceted community with probability
$ \geq
(f\cdot n)^{- O\left(\log{\left(1/\gamma\right)}/{\alpha} \right)}  \left(\frac{f\cdot \fracconnect  \log{\left(1/\gamma\right)}}{\alpha}\right)^{-O\left(\frac{1}{\gamma^2} \log{\left(\frac{ \fracconnect \log{\left(1/\gamma\right)}}{\alpha \gamma}\right)}\right)} f^{-O(\log n/\ga^2)}.
$
\end{theorem}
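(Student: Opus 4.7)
The plan is to carry out the three-stage pipeline sketched in the paragraph preceding the theorem: (A) produce a coarse candidate $S_2$ with $|\Delta(S_2,S)|\leq \gamma t/8$; (B) refine $S_2$ into the exact set $S$; (C) post-hoc extract a facet structure $\psi'$ on $S$ that certifies the weakened $(\alpha-\gamma/4,\beta+\gamma/4,\theta)$ bounds. The unifying trick for (A) and (B) is that whenever the single-facet algorithms (Algorithms~\ref{alg:selfstablerough} and~\ref{alg:selfstablepurify}) fix a small ``seed'' set $U$ and count votes from its members, we additionally guess the facet assignment $\psi|_U\in\{1,\dots,f\}^{|U|}$ and count multifaceted votes accordingly. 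This multiplies the success probability by $f^{-|U|}$ but otherwise leaves the structural content of Lemmas~\ref{multiple} and~\ref{almost} intact, since those lemmas only rely on members of $S$ voting correctly for one another, which in the multifaceted setting means voting via the correct facets.

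For (A), instead of enumerating all $k_1$-subsets as in Algorithm~\ref{alg:selfstablerough}, I would sample $k_1=O(\log(1/\gamma)/\alpha)$ members of $V$ uniformly at random and guess their facets. All landing inside $S$ has probability $(t/n)^{k_1}=n^{-O(\log(1/\gamma)/\alpha)}$; conditioned on that, matching the greedy cover of Lemma~\ref{multiple} is constant-probable, yielding $S_1$ with $|S\setminus S_1|\leq \gamma t/16$. The purification step of Algorithm~\ref{alg:selfstablepurify} is then run identically with a facet guess on its $k_2=O(\gamma^{-2}\log(\theta k_1/\gamma))$-sized sample from $S_1$; the multifaceted analog of Lemma~\ref{almost} produces $S_2$ with $|\Delta(S_2,S)|\leq \gamma t/8$. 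For (B) I draw $U_2\subseteq S_2$ uniformly of size $m=8\log n/\gamma^2$; since $|S_2\setminus S|\leq \gamma t/8$, with probability $\geq 1/2$ every element of $U_2$ lies in $S$, and guessing $\psi|_{U_2}$ costs $f^m$, supplying the final factor in the stated probability. Hoeffding plus a union bound over $n$ candidate vertices shows that the set $\{i\in V : \phi_{U_2,\psi|_{U_2}}^\theta(i)\geq (\alpha-\gamma/2)|U_2|\}$ equals $S$ with high probability, because the $\gamma/2$ threshold margin comfortably absorbs the $O(\sqrt{\log n/m})=O(\gamma)$ sampling deviation.

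Phase (C) is where the real obstacle lies. Given $S$, I would extract $\psi'$ by linear programming. Introduce variables $x_{i,j}\geq 0$ for $i\in S$, $j\in\{1,\dots,f\}$ with $\sum_j x_{i,j}=1$, together with the voting constraints $\sum_{i\in S,\,j\,:\,k\in\pi_i^j(1:\lceil\theta t\rceil)} x_{i,j}\geq \alpha t$ for each $k\in S$ and $\leq \beta t$ for each $k\notin S$. The unknown true $\psi$ is an integral feasible solution, so the LP is feasible; solve it and then round by sampling $\psi'(i)$ independently with $\Pr[\psi'(i)=j]=x_{i,j}$. By Hoeffding applied to each of the $n$ vote sums, with failure probability $\leq 1/n^2$ the integral count deviates from its LP expectation by at most $(\gamma/4)t$, provided $t=\Omega(\log n/\gamma^2)$, so $(S,\psi')$ is the desired $(\alpha-\gamma/4,\beta+\gamma/4,\theta)$-multifaceted community. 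The technical difficulty is precisely this concentration/union-bound balance: when $t=O(\log n/\gamma^2)$, rounding noise can swamp the $\gamma/4$ slack, so one must instead brute-force over all $f^t$ facet assignments of $S$ --- a cost already absorbed in the $f^{-O(\log n/\gamma^2)}$ factor of the stated probability. The $O(n^2)$ running time follows because each phase touches the $n\times n$ preference data only a constant number of times.
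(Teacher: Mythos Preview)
Your three-phase pipeline (rough approximation with facet guesses, purification-to-$S$ with facet guesses, LP rounding or brute force for $\psi'$) is exactly the paper's approach; the paper organizes it as Claim~\ref{cl:multf} together with Lemma~\ref{lem:multf}, and your phase~(C) matches the paper's two cases (brute force when $t\le 8\log n/\gamma^2$, LP plus randomized rounding otherwise).

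There is one genuine slip in phase~(B). You assert that, since $|S_2\setminus S|\le \gamma t/8$, ``with probability $\geq 1/2$ every element of $U_2$ lies in $S$.'' This is false: a single draw from $S_2$ lands in $S$ with probability roughly $1-\gamma/4$, so the probability that all $m=8\log n/\gamma^2$ draws land in $S$ is about $(1-\gamma/4)^m\approx n^{-2/\gamma}$, which tends to $0$. If you insist on $U_2\subseteq S$, you pick up an extra $n^{-\Theta(1/\gamma)}$ factor that is not in the stated bound.

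The fix, which is precisely what the paper does, is not to require $U_2\subseteq S$. Sample $U_2$ uniformly from $S_2$ and guess facets for \emph{all} of $U_2$ (for elements of $U_2\setminus S$, any fixed choice will do). The ``true'' vote counts from $S_2$ already separate $S$ from $V\setminus S$: each $i\in S$ receives at least $(\alpha-\gamma/8)t$ votes from $S_2\cap S$ (using $\psi$), and each $j\notin S$ receives at most $(\beta+\gamma/8)t$ votes from $S_2$ (since $|S_2\setminus S|\le \gamma t/8$). Hoeffding on the sample $U_2$ then works exactly as you describe, and the only probabilistic cost is $f^{-m}$ for the facet guess matching $\psi$ on $U_2\cap S$, yielding the $f^{-O(\log n/\gamma^2)}$ factor with the intended constant $1/2$ in front.
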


\section{A Local Algorithm for Finding Self-determined Communities}
\label{sec:local}
In this section we describe a local algorithm for finding a community.
Given a single element $v$ and the target community size $t$, the goal of the
 algorithm is to output a community $S$ of size $t$ containing $v$.
Let us fix a target community $S$ that we are trying to uncover this way.

We  note that we need $\al>1/2$ for a local algorithm that uses only one seed to succeed. If $\al \le 1/2$ then one
may have a valid $(\ta,\al,\be)$-community that is comprised of two disjoint cliques of vertices. In this case,
no local algorithm that starts with just one vertex as a seed may
uncover {\em both} cliques, however we can extend the construction
below if we start with  $O(1/\alpha)$ seeds.
Below, we focus on providing a local algorithm for $\al>1/2$. Our local algorithm will follow the structure of the generic Algorithm~\ref{alg:selfstable}.
The main technical challenge is to provide a local procedure for producing rough approximations.
In general, it is not possible to do so starting from
{\em any} seed vertex $v\in S$. For example, if $v$ is a super-popular vertex that is voted first by {\em everyone} in $V$, then
$v$ will belong to all communities including $S$, but $v$ would contain no ``special information" that would allow one to identify
$S$. However, we will show that {\em a constant fraction} of the nodes in $S$ are sufficiently ``representative" of $S$ to
enable one to recover $S$.

Let us fix $t$ and $\fracconnect$. For an element $v$, we let $R(v)$ be a uniformly
random element which receives $v$'s vote with these parameters. In other words,
$
R(v):=\text{uniform element of }~\pi_v(1:\ta \cdot t).
$ We start  with the main technical claim that enables a local procedure for producing rough approximations.
\begin{lemma}
\label{cl:1}
Let $S$ be any $(\ta,\al,\be)$-community of size $t$. Let $\eta:=2\al-1>0$. Then there is a subset $T\subseteq S$ such that
$|T|\ge \eta t$ and for each pair $v\in T$ and $u\in S$, we have
$
\Pr[R(R(v))=u] \ge \frac{(\al-1/2)/\ta^2}{t}.$
\end{lemma}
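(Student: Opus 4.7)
The plan is to compute $\Pr[R(R(v))=u]$ directly from its definition as a two-step random walk along preference lists, and then identify the subset $T\subseteq S$ on which this probability is uniformly large. The driving intuition is that voters $v\in S$ that cast many of their votes inside $S$ lead, after two random steps, to a near-uniform distribution on $S$, so $T$ should consist of exactly those "$S$-heavy" voters.

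First I would unfold the definition. Since $R(v)$ is uniform over $\pi_v(1:\theta t)$,
$$\Pr[R(R(v))=u]\;=\;\frac{1}{(\theta t)^2}\,\bigl|\{w\in V: w\in\pi_v(1:\theta t)\text{ and }u\in\pi_w(1:\theta t)\}\bigr|.$$
I would restrict the intermediate $w$ to lie in $S$ (this can only lower the count). Let $X_v:=|\pi_v(1:\theta t)\cap S|$ be the number of votes $v$ casts inside $S$, so $A:=\pi_v(1:\theta t)\cap S$ has size $X_v$, and let $B:=\{w\in S:u\in\pi_w(1:\theta t)\}$, which has size at least $\alpha t$ by the community condition applied to $u\in S$. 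Both $A,B\subseteq S$ and $|S|=t$, so inclusion–exclusion gives $|A\cap B|\geq X_v+\alpha t-t$, yielding
$$\Pr[R(R(v))=u]\;\geq\;\frac{X_v/t+\alpha-1}{\theta^2\,t}.$$

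Next I would set $T:=\{v\in S:X_v\geq t/2\}$. For $v\in T$ the display immediately gives $\Pr[R(R(v))=u]\geq(\alpha-1/2)/(\theta^2 t)$, which is the claimed bound. It then remains to lower bound $|T|$, which I would do by averaging: by the community condition, $\sum_{v\in S}X_v=\sum_{i\in S}\phi^\theta_S(i)\geq\alpha t^2$, and using $X_v\leq t$ everywhere while $X_v<t/2$ outside $T$ gives
$$\alpha t^2\;\leq\;|T|\cdot t+(t-|T|)\cdot\frac{t}{2},$$
which rearranges to $|T|\geq(2\alpha-1)t=\eta t$.

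I do not foresee a serious obstacle: the argument is a one-step inclusion–exclusion plus a standard averaging over $S$. The only real design choice is the threshold defining $T$, and the value $t/2$ is essentially forced — it precisely loses $1/2$ from $\alpha$ in the probability bound, while simultaneously being low enough that a $(2\alpha-1)$-fraction of $S$ must sit above it given the aggregate vote budget $\alpha t^2$.
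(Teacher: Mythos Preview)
Your proposal is correct and is essentially the paper's own proof. The paper defines $O_S(v)=\pi_v(1{:}\theta t)\cap S$ and $I_S(u)=\{w\in S:u\in\pi_w(1{:}\theta t)\}$ (your $A$ and $B$), sets $T=\{v:|O_S(v)|\ge t/2\}$, uses inclusion--exclusion to get $|O_S(v)\cap I_S(u)|\ge(\alpha-1/2)t$, and lower-bounds $|T|$ via the same double-counting $\sum_v |O_S(v)|=\sum_u |I_S(u)|\ge\alpha t^2$; your write-up actually spells out the averaging step more explicitly than the paper does.
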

\begin{proof} 
For each element $v\in S$ denote by $O_S(v):= \pi_v(1:\ta\cdot t)\cap S$ -- the elements of $S$ that $v$ votes for, and by
$I_S(v):= \{u\in S:~v\in \pi_u(1:\ta\cdot t)\}$ -- the elements of $S$ that vote for $v$. By the community property we know that
$|I_S(v)|\ge \al t$ for all $v\in S$. 
Observe that
$$
\sum_{v\in S} |O_S(v)| = \sum_{v\in S}|I_S(v)| \ge \al t^2.
$$
Hence at least an $\eta$-fraction of $v$'s in $S$ must satisfy $|O_S(v)|\ge t/2$, where $\eta=2\al-1$. Let
$
T:=\{v:~|O_S(v)|\ge t/2\}\subseteq S.
$
For any $v\in T$ and any $u\in S$, we have $$|O_S(v)\cap I_S(u)|\ge |O_S(v)|+|I_S(u)|-t \ge (\al-1/2)\cdot t.$$
To finish the proof note that
$$
\Pr[R(R(v))=u] \ge \Pr[R(v)\in O_S(v)\cap I_S(u)] \cdot \frac{1}{\ta \cdot t} \ge \frac{(\al-1/2)\cdot t}{\ta\cdot t} \cdot \frac{1}{\ta \cdot t} =
\frac{(\al-1/2)/\ta^2}{t}.
$$
\end{proof}


We call any vertex $v$ in the set $T$  in Lemma~\ref{cl:1} a ``good seed vertex'' for $S$.
Lemma~\ref{cl:1} suggests a natural procedure (Algorithm~\ref{alg:selfstableroughlocal}) for generating a rough approximation
in a local way given a good seed vertex.

\begin{algorithm}
  \caption{Generate rough approximations}
  {\bf Input:}   Preference system $(V, \Pi)$, information $I$ (parameters $\fracconnect$, $\alpha$, $\beta$, $\gamma$, vertex $v$, size $t$).
\begin{enumerate}
\item[$\bullet$] Set $S_1=\left\{u:~\Pr[u=R(R(v))]\ge \frac{(\al-1/2)/\ta^2}{t}\right\}$.
\end{enumerate}
{\bf Output:}  List of sets $\listt=\{S_1\}$.
\label{alg:selfstableroughlocal}
\end{algorithm}

\begin{theorem}
\label{thm:loc1}
Assume $\al>1/2$.
Let $k_2(\ta,\al,\ga)=O\left(\frac{\log(\ta/\de \ga(\al-1/2))}{\ga^2}\right)$,
 $N_2(\ta,\al,\ga)=\left(\frac{\ta^2}{\al-1/2} \right)^{k_2(\ta,\al,\ga)}\log (1/\de)$.
 Assuming $v$ is a good seed element for a community $S$, then by using Algorithm~\ref{alg:selfstableroughlocal} together with
Algorithm~\ref{alg:selfstablepurify} for steps (1) and (2) of
Algorithm~\ref{alg:selfstable}, we have that with probability $\geq 1-\de$
we will output  $S$.
\end{theorem}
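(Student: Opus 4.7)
The plan is to follow the structure of Algorithm~\ref{alg:selfstable}: show first that Algorithm~\ref{alg:selfstableroughlocal} returns a set $S_1$ that is a bona-fide rough approximation of $S$ in the sense required by Lemma~\ref{almost}, and then show that the purification in Algorithm~\ref{alg:selfstablepurify} exactly recovers $S$ from $S_1$ with the claimed probability.

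For the first step, I would establish two properties of $S_1$: containment $S\subseteq S_1$ and the size bound $|S_1|\le t\tau^2/(\alpha-1/2)$. Containment is immediate from Lemma~\ref{cl:1}: since $v$ is a good seed, every $u\in S$ satisfies $\Pr[R(R(v))=u]\ge ((\alpha-1/2)/\tau^2)/t$, so $u\in S_1$ by definition of the algorithm. The size bound follows because the probabilities $\Pr[R(R(v))=u]$ sum to $1$, so at most $t\tau^2/(\alpha-1/2)$ elements can have probability exceeding $((\alpha-1/2)/\tau^2)/t$.

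For the second step, I would invoke Lemma~\ref{almost} with $\tS = S\cap S_1 = S$ (so $|S\setminus S_1|=0\le \gamma t/16$ trivially) and with the effective constant $M=\tau/(\alpha-1/2)$, so that $|S_1|\le M\tau t$. The only subtlety is that Algorithm~\ref{alg:selfstablepurify} samples $U_2$ uniformly from $S_1$ rather than from $\tS$. To bridge this, note that in each round $U_2\subseteq \tS$ holds with probability at least $(|\tS|/|S_1|)^{k_2}\ge ((\alpha-1/2)/\tau^2)^{k_2}$, and conditioned on this event $U_2$ is uniformly distributed on $\tS$, exactly matching the hypothesis of Lemma~\ref{almost}. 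So with constant (say $1/2$) conditional probability we obtain a set $S_2$ with $|\Delta(S_2,S)|\le \gamma t/8$. Standard amplification over $N_2 = (\tau^2/(\alpha-1/2))^{k_2}\log(1/\delta)$ independent rounds then makes at least one such round occur with probability $\ge 1-\delta$.

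It remains to verify that in such a lucky round the subsequent set $S_3=\{i\in V:\vn^{\tau t}_{S_2}(i)\ge (\alpha-\gamma/2)|S_2|\}$ equals $S$ exactly. For $i\in S$, at least $\alpha t-\gamma t/8$ members of $S_2$ vote for $i$ (since at most $\gamma t/8$ voters in $S$ can fall outside $S_2$), while $(\alpha-\gamma/2)|S_2|\le(\alpha-\gamma/2)(t+\gamma t/8)<\alpha t-\gamma t/8$, so $i\in S_3$. For $j\notin S$, at most $\beta t+\gamma t/8$ members of $S_2$ vote for $j$, while $(\alpha-\gamma/2)|S_2|\ge(\alpha-\gamma/2)(t-\gamma t/8)>\beta t+\gamma t/8$ using $\gamma=\alpha-\beta$, so $j\notin S_3$. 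The main obstacle in executing this plan is the bookkeeping of constants, in particular confirming that the choice $M=\tau/(\alpha-1/2)$ makes the $k_2$ prescribed by Lemma~\ref{almost} match the $k_2(\tau,\alpha,\gamma)$ in the theorem statement, and that the membership/non-membership calculation above goes through for all $\alpha\le 1$; both are routine.
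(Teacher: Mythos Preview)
Your proposal is correct and follows essentially the same approach as the paper's proof: both establish $S\subseteq S_1$ from Lemma~\ref{cl:1}, bound $|S_1|\le t\theta^2/(\alpha-1/2)$, apply Lemma~\ref{almost} with $M=\theta/(\alpha-1/2)$, and handle the mismatch between sampling from $S_1$ versus from $\tilde S=S$ by conditioning on the event $U_2\subseteq S$, which has probability at least $((\alpha-1/2)/\theta^2)^{k_2}$. You additionally spell out the size bound via $\sum_u\Pr[R(R(v))=u]=1$ and the verification that $S_3=S$, which the paper leaves as ``easy to see'' and ``a simple calculation,'' but the argument is identical in substance.
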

\begin{proof}
It is enough to show that each iteration of the purification algorithm (Algorithm~\ref{alg:selfstablepurify}) has a probability
$\geq \left(\frac{\al-1/2}{\ta^2} \right)^{k_2}$ to output $S$.
Since $v$ is a good  seed element of $S$, the set $S_1$ produced by Algorithm~\ref{alg:selfstableroughlocal} must contain $S$.
It is easy to see that  $|S_1| \leq t\ta^2/(\al-1/2)$. Thus, applying Lemma~\ref{almost} with $M=\ta/(\al-1/2)$ we see that if the points of $U_2$ are drawn uniformly
from $S$, then with high probability $S_2$ is $\ga/8$-close to $S$, and $S_3=S$. Since conditioned on $U_2\subseteq S$, $U_2$ is
uniform in $S$, our probability of success is given by the probability that $U_2\subseteq S$, which is equal to
$
\left( \frac{|S|}{|S_1|} \right)^{k_2} \ge
\left( \frac{\al-1/2}{\ta^2}
\right)^{k_2}
$,
which completes the proof.
\end{proof}

Note that when $\al>1/2$, $\be$, and $\ta$ are constants, the purification procedure will run in a constant number of iterations.
Our main result of this section is the following:

\begin{theorem}
\label{thm:mainloc}
Suppose $\alpha > 1/2$.  Assume $\alpha$, $\beta$, $\fracconnect$, and $\delta$ are
constants.  If $v$ is chosen uniformly at random from $S$, then with
probability at least $(2\alpha-1)(1-\delta)$ we can find $S$ in time
$O(t \log t)$.
\end{theorem}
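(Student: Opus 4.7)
I would derive the theorem by combining Lemma~\ref{cl:1} with a sampling-based implementation of Algorithms~\ref{alg:selfstableroughlocal} and~\ref{alg:selfstablepurify} that runs in $O(t\log t)$ time rather than the $\Theta(t^2)$ required by a literal implementation.

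\textbf{Success probability.} Lemma~\ref{cl:1} shows the set $T$ of good seed vertices inside $S$ has $|T|\ge (2\alpha-1)t$, so a uniformly random $v\in S$ lies in $T$ with probability at least $2\alpha-1$. Conditional on $v\in T$, Theorem~\ref{thm:loc1} (with $\alpha,\beta,\theta,\delta$ treated as constants, so that $k_2,N_2$ are constants) states that the algorithm returns $S$ with probability at least $1-\delta$. Multiplying gives the claimed bound $(2\alpha-1)(1-\delta)$.

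\textbf{Running time.} Two steps are naively quadratic: computing $S_1=\{u:\Pr[R(R(v))=u]\ge (\alpha-1/2)/(\theta^2 t)\}$ exactly enumerates all $\theta^2 t^2$ pairs $(w,u)$ with $w\in\pi_v(1:\theta t)$ and $u\in\pi_w(1:\theta t)$, and computing $S_3=\{i\in V:\vn^{\theta t}_{S_2}(i)\ge (\alpha-\gamma/2)|S_2|\}$ costs $|S_2|\cdot\theta t=\Theta(t^2)$ since $|S_2|$ is close to $t$. I would replace each by an empirical estimate. For $S_1$, draw $N_1=\Theta(t\log t)$ samples of $R(R(v))$ (each in $O(1)$ time with random access to the rankings) and keep the elements whose empirical count exceeds $N_1(\alpha-1/2)/(2\theta^2 t)$; a Chernoff bound gives that the resulting $\hat S_1$ contains $S$ with probability $\ge 1-t^{-9}$, while the total-count bound yields $|\hat S_1|\le 2t\theta^2/(\alpha-1/2)=O(t)$. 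For $S_3$, draw a uniform subset $U_3\subseteq S_2$ of size $\Theta(\log t/\gamma^2)$ and threshold $\vn^{\theta t}_{U_3}$ at $(\alpha-\gamma/2)|U_3|$; since $S_2$ is $\gamma/8$-close to $S$, the true $S_2$-fraction voting for any $i\in S$ is at least $\alpha-\gamma/4$ and for any $j\notin S$ is at most $\beta+\gamma/4$, and a Chernoff plus union bound over the $O(t\log t)$ candidate elements (those voted by at least one $u\in U_3$, together with $S$ itself) gives $\hat S_3=S$ with probability $1-o(1)$. This step visits $|U_3|\cdot\theta t=O(t\log t)$ entries; computing $S_2$ from $\hat S_1$ using the $|U_2|=O(1)$ seeds costs only $O(t)$ with a hash table for $\hat S_1$ membership.

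\textbf{Selecting the output and main obstacle.} The outer purification loop runs $N_2=O(1)$ times, producing $O(1)$ candidates; by Theorem~\ref{thm:loc1} at least one equals $S$ with probability $\ge 1-\delta$, and a final sampling-based verification (checking on $O(\log t)$ random members and $O(\log t)$ random non-members of a candidate $S'$ that their vote counts from $S'$ are close to $\alpha|S'|$ and $\beta|S'|$ respectively) distinguishes $S$ from spurious candidates in $O(t\log t)$ additional time. All inverse-polynomial-in-$t$ failure terms absorb into $\delta$ for sufficiently large $t$. The main obstacle is precisely the $S_3$ replacement: we must argue that voting by only a \emph{logarithmic-size} sample of $S_2$ still recovers $S$ \emph{exactly}, not merely approximately, which crucially uses the $\gamma/8$-closeness of $S_2$ to $S$ inherited from the purification analysis together with the gap $\gamma$ between the ``$\alpha$''-side and ``$\beta$''-side thresholds.
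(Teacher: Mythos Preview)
Your argument matches the paper's: Lemma~\ref{cl:1} supplies the $(2\alpha-1)$ factor, $O(t\log t)$ samples of $R(R(v))$ build $S_1$ (exactly as in the paper's proof), and the purification analysis of Theorem~\ref{thm:loc1} supplies the $(1-\delta)$ factor. The paper's own proof then simply asserts the $O(t\log t)$ running time from $N_2=O(1)$ and does not spell out how to avoid the $\Theta(t^2)$ cost of computing $S_3$ from all of $S_2$; your extra sampling layer $U_3\subseteq S_2$ of size $\Theta((\log t)/\gamma^2)$ is a legitimate way to fill in that detail, and is the one place where you go beyond the paper. Two small corrections. First, your union bound for $\hat S_3$ should range over the \emph{fixed} set $\{j:v_{S_2}^{\theta t}(j)>0\}$, of size at most $|S_2|\cdot\theta t=O(t^2)$, rather than over the random set ``voted by some $u\in U_3$''; since $\log(t^2)=O(\log t)$ your conclusion is unaffected. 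Second, your sampling-based ``verification'' is the weak step: probing only $O(\log t)$ random members and non-members can accept a spurious candidate $S'$ that differs from $S$ in just a few vertices, since random probes need not hit the offending vertex. As $N_2=O(1)$, the cleanest fix is to output the constant-length list of candidates (as the generic Algorithm~\ref{alg:selfstable} does) rather than try to single out $S$.
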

%

\begin{proof}
First, by Lemma 3, with probability at least $2\alpha-1$, element $v$
is such that for all $u \in S$, we have $\Pr[R(R(v)) =u] \geq
\frac{\alpha-1/2}{\fracconnect^2t}$.  We now implement Algorithm 4 by
performing $\left(\frac{8\fracconnect^2t}{\alpha-1/2}\right)\log(2t/\delta)$
random draws from $R(R(v))$ and letting $S_1$ be the set of points $u$
hit at least $4\log(2t/\delta)$ times.   By Chernoff bounds, for each
$u \in S$, we have included $u$ in $S_1$ with probability at least $1
- e^{-8\log(2t/\delta)/8} = 1-\delta/(2t)$, so with probability at
least $1-\delta/2$ we have $S_1 \supseteq S$.   Furthermore, since we
only include points hit at least $4\log(2t/\delta)$ times, we have
$|S_1| \leq \left(\frac{2\fracconnect^2 t}{\alpha - 1/2}\right)$.  Thus, the
analysis in Theorem 3 implies that the purification step (Algorithm 3)
will succeed with probability at least $1 - \delta/2$ for a choice of
$N_2 = \left(\frac{2\fracconnect^2}{\alpha -
1/2}\right)^{k_2(\fracconnect,\alpha,\gamma)} \log (2/\delta)$.  Putting
these together yields the desired success probability.  Furthermore,
since $\alpha$, $\beta$, $\fracconnect$, $\delta$ are constants, the overall
time is $O(t \log t)$.
\end{proof}

It is not hard to see that
the algorithm in Theorem~\ref{thm:loc1} will work even if $t$ is given to it only up to some small multiplicative error.
As a corollary of Theorem~\ref{thm:loc1}, we see that the number of communities is actually linear and we can find all of them in quasilinear time.
\begin{theorem}
\label{quasilinearbound}
Suppose that $\al>1/2$. The total number of $(\ta,\al,\be)$-self-determined communities is bounded by
$
O\left(
n \cdot\frac{1}{\min(\ga,1/2-\al)} \cdot  \left(\frac{\ta^2}{\al-1/2} \right)^{O\left(\frac{\log(\ta/\de \ga(\al-1/2))}{\ga^2}\right)}
\right),
$
which is $O(n)$ if $\al$, $\be$, and $\ta$ are constants.
\end{theorem}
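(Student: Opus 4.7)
The plan is to derive the bound by a double-counting argument built on Theorem~\ref{thm:mainloc}: the local algorithm returns at most one set per run, yet every community of size $t$ is recovered with non-trivial probability when seeded from inside itself, and balancing these two facts limits the number of such communities.

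Concretely, fix a candidate size $t \ge k_2(\theta,\alpha,\gamma)$ and run the local algorithm (Algorithm~\ref{alg:selfstableroughlocal} followed by Algorithm~\ref{alg:selfstablepurify}) with parameter $t$ from every seed $v\in V$. For each seed $v$ and each community $S$ of size $t$, let $P_{v,S}$ be the probability that the algorithm's output equals $S$. Since at most one set is output per run, $\sum_{|S|=t}P_{v,S}\le 1$ for each $v$, so $\sum_{v,\,S:|S|=t}P_{v,S}\le n$. On the other hand, Lemma~\ref{cl:1} guarantees that each community $S$ of size $t$ has at least $(2\alpha-1)t$ good seeds, and by Theorem~\ref{thm:loc1} each good seed $v$ produces $S$ with probability at least
\[
q \;:=\; (1-\delta)\,\Bigl(\tfrac{\alpha-1/2}{\theta^{2}}\Bigr)^{k_2};
\]
hence $\sum_v P_{v,S}\ge (2\alpha-1)\,t\,q$. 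Combining the two inequalities yields
\[
C_t \;\le\; \frac{n}{(2\alpha-1)\,q\,t},
\]
where $C_t$ denotes the number of self-determined communities of size $t$.

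Summing over $t$ gives the total count of communities. The factor $1/q=(\theta^2/(\alpha-1/2))^{k_2}/(1-\delta)$ reproduces the $(\theta^2/(\alpha-1/2))^{O(\log(\theta/\delta\gamma(\alpha-1/2))/\gamma^2)}$ term in the statement, and $1/(2\alpha-1)\le 1/\min(\gamma,\alpha-1/2)$ up to constants (using $\min(\gamma,\alpha-1/2)\le \alpha-1/2\le 2\alpha-1$). For constant $\alpha>1/2$, $\beta$, and $\theta$ these factors are constant, so the sum is $O(n)$. Once this linear bound on the number of communities is established, the quasilinear running time for finding all of them follows immediately by iterating the $O(t\log t)$ local algorithm from each of the $n$ seeds and discarding duplicates.

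The main technical point is the summation $\sum_t 1/t$, which naively contributes an additional $O(\log n)$ factor; extracting the cleanest linear bound requires either restricting to dyadic scales of $t$ (there are only $O(\log n)$ such scales, which can be absorbed into the existing $O(\log(\cdot)/\gamma^2)$ exponent), or observing that for $\alpha>1/2$ communities below a constant threshold size are ruled out by the self-determination requirement and therefore contribute only an $O(1)$ additional factor. A smaller subtlety is to invoke the single-iteration success probability from Theorem~\ref{thm:loc1} rather than the amplified version of Theorem~\ref{thm:mainloc}, so that $1/q$ matches the claimed exponent cleanly; the remaining arithmetic is routine.
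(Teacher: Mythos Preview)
Your double-counting argument is precisely the paper's: at most one set is output per seeded run, while each community $S$ of size $t$ is output with probability at least $q$ from each of its $\ge (2\alpha-1)t$ good seeds, so $C_t \le n/((2\alpha-1)\,q\,t)$. The gap is in the summation over $t$. Neither of your proposed fixes removes the harmonic factor: an extra $\log n$ cannot be ``absorbed into the exponent'' of a bound whose exponent is constant in $n$ (that would turn an $O(n)$ bound into something polynomial), and ruling out sub-constant sizes still leaves $\sum_{t\ge c}^{n} 1/t = \Theta(\log n)$.

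The paper closes this gap via the remark immediately preceding the theorem: the local algorithm still succeeds when $t$ is only known up to a multiplicative $(1+\varepsilon)$, with $\varepsilon = \Theta(\min(\gamma,\alpha-1/2))$. One therefore runs the algorithm only at geometric scales $t' \in \{(1+\varepsilon)^i\}$; a single such $t'$ accounts for \emph{all} communities of size in $[t',(1+\varepsilon)t']$, and the bound $n/((2\alpha-1)\,q\,t')$ still applies to that entire range. The sum over $i$ is now geometric, $\sum_i (1+\varepsilon)^{-i} = O(1/\varepsilon)$, and this is exactly where the factor $1/\min(\gamma,\alpha-1/2)$ in the statement comes from --- it is the price of the scale discretization, not (as you wrote) an upper bound on $1/(2\alpha-1)$. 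Once you invoke this robustness and replace the harmonic sum by a geometric one, the rest of your argument goes through unchanged.
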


\begin{proof}
It is easy to see that executing the Algorithm in Theorem~\ref{thm:mainloc} where we only do one iteration of the purification step (i.e., of Algorithm~\ref{alg:selfstablepurify})  with
inputs $t'\in ((1-\ve)t,(1+\ve)t)$, $\al'=\al-4 \ve$, $\be'=\be+4 \ve$, $\ta'=\ta(1+\ve)$, and an appropriate seed
vertex $v\in S$ will lead to a discovery of an $(\ta, \al,\be)$-community of size $|S|=t$ with probability
 $\ge p:= \left(\frac{\al-1/2}{\ta^2} \right)^{k_2(\theta,\alpha,\gamma)}$, as long as $\ve$ is sufficiently small.
Here it is enough to take $\ve = \min(\ga, \al -1/2)/100$.  Thus a pair $(v,t')$, where $v$ is a vertex and $t'$ is the target size
corresponds to at most $1/p$ distinct communities.
Moreover, each community $S$ of size
$t$ corresponds to more than $t(2\al-1)/2$ such pairs.
Since $t'$ needs only to be within a multiplicative $(1+\ve)$ from $t$,
we can always select $t'$ from the set of values $\{(1+\ve)^i:~i=0,1,\ldots, \lceil \log_{1+\ve}n \rceil\}$. For each
value $t'$, the number of communities of size between $t'$ and $t'(1+\ve)$ is thus bounded by the number of possible pairs $(t',v)$ ($=n$),
times $1/p$ and divided by $t'(2\al-1)/2$:
$$
\#\{\text{communities of size between $t'$ and $t'(1+\ve)$}\} \le
\frac{n}{t'} \cdot \frac{1/p}{(2\al-1)/2}.
$$
Summing over the possible values of $t'$ we obtain the upper bound:
$$
n \cdot \frac{2}{\ve (2\al-1)} \cdot\left(\frac{\ta^2}{\al-1/2} \right)^{k_2(\theta,\alpha,\gamma)},
$$
which leads to the bound in the statement of the theorem.
\end{proof}

\noindent{\bf Note:~} We can extend our local approach  to  weighted and multi-faceted affinity systems. See Appendix~\ref{addsec:local}.


\subsection{An Alternative Non-local Algorithm}
\label{alternative-non-local}
The analysis in  this section suggests an alternative way for generating rough approximations in the non-local model which leads to an algorithm that provides asymptotically better bounds than Theorem~\ref{main}
in interesting cases, in particular when $\theta$, $\alpha$, and $\gamma$ are
constants and there is a large gap
between $\alpha$ and $\gamma$. This leads to an improved polynomial bound of
   $n^{O(\log(1/\alpha)/\alpha)}$ on the number of
   $(\fracconnect,\alpha,\beta)$-self-determined communities when $\theta$, $\alpha$, and $\gamma$ are
constants using Algorithm~\ref{alg:selfstablerough1}:

\begin{algorithm}
  \caption{Generate rough approximations}
  {\bf Input:}   Preference system $(V, \Pi)$, information $I$
  (parameters $\fracconnect$, $\alpha$, $\beta$, size $t$).
\begin{enumerate}
\item[$\bullet$] Set $\listt=\emptyset$; $\gamma=\beta-\alpha$.
\item[$\bullet$] Exhaustively search over all subsets  $U_0$ of
$V$ of size $\lceil (\log 1/\al)/\al\rceil+1$; for each $U_0$ to the $\listt$  the set
 $S_1:= \left\{x: \sum_{y\in U_0} \Pr[x=R(R(y))]\ge \frac{\al }{2\ta^2 t}\right\}$.
\end{enumerate}
{\bf Output:}  List of sets $\listt$.
\label{alg:selfstablerough1}
\end{algorithm}

\begin{theorem}
\label{main1}
Fix a $(\fracconnect,\alpha,\beta)$-self-determined  community $S$. Let
$\gamma=\alpha-\beta$, $k_1(\fracconnect, \alpha, \gamma)= O\left(\log{\left(1/\alpha\right)}/{\alpha} \right)$,
 $k_2 (\fracconnect, \alpha, \gamma)= O\left(\frac{1}{\gamma^2} \log{\left(\frac{ \fracconnect k_1}{\gamma \delta}\right)} \right)$,
  $N_2(\fracconnect, \alpha, \gamma)=O({(\fracconnect^2/\alpha^3)}^{k_2} \log{(1/\delta)})$.
  Using Algorithm~\ref{alg:selfstablerough1} together with Algorithm~\ref{alg:selfstablepurify} for steps (1) and (2) of
Algorithm~\ref{alg:selfstable}, 
  then with probability $\geq 1-\delta$ one of the elements in the list $\listt$ we output is {\em identical} to $S$.
\end{theorem}

\begin{sketch}
By using a reasoning similar to the one in Lemma~\ref{almost} we can show that there exist a set $U_0$ of $ \lceil (\log 1/\al)/\al\rceil+1$
points  such that the subset $U_1$ of points voted by at least a member in $U_0$ contains  $\geq 1-\al/2$ fraction of $S$.
We show in the following that the corresponding set $S_1$ indeed covers $S$. Fix a vertex $x\in S$. We need to show that
$$\sum_{y\in U_0} \Pr[x=R(R(y))]\ge \frac{\al }{2\ta^2 t}.$$
Let $Q\subseteq S$ be the set of elements that vote for $x$. We know that $|Q|\ge \al t$, since $x\in S$. Thus
$$|U_1\cap Q| \ge |U_1|+|Q|-|S| > \al t/2.$$
Each $z\in U_1\cap Q$ contributes at least $1/\ta^2 t^2$ to the sum  $\sum_{y\in U_0} \Pr[x=R(R(y))]$. Thus this sum  is
at least
$
(\al t/2)\cdot (1/\ta^2 t^2) = \al/(2\ta^2 t)
$.
Hence $x\in S_1$, as required. Moreover, by observing that
$$
\sum_x \sum_{y\in U_0} \Pr[x=R(R(y))] = \sum_{y\in U_0} \sum_x \Pr[x=R(R(y))] < 1/\al^2,
$$ we obtain $|S_1|< \frac{2\ta^2 t}{\al^3}.$

Since when running Algorithm~\ref{alg:selfstablerough1} we exhaustively search over all subsets of $U_1$ of $V$ of size
$k_1(\fracconnect, \alpha,\gamma)$, in one of the rounds we find a set $U_1$ s.t. $|S_1|< \frac{2\ta^2 t}{\al^3}$, $S \subseteq S_1$.
So, $\listt_1$ contains a rough approximation to $S$. Finally, using a reasoning similar to the one in Theorem~\ref{main} we get the desired conclusion.
\end{sketch}

Theorem~\ref{main1} 
 gives asymptotically better bounds than Theorem~\ref{main}
when $N_1 = n^{k_1(\theta,\alpha,\gamma)}$ is the
dominant term in the bound (e.g., when $\theta$, $\alpha$, and $\gamma$ are
constants) and especially when there is a large gap
between $\alpha$ and $\gamma$ -- since $k_1$ is reduced from
$\log(16/\gamma)/\alpha$ to $\lceil\log(1/\alpha)/\alpha\rceil+1$.  On
the other hand, Theorem~\ref{main1} has worse dependence on $\theta$ and
$\alpha$ in $N_2$, so for certain parameter settings, Theorem~\ref{main} can be
preferable especially if one optimizes the constants in Lemmas~\ref{multiple} and~\ref{almost}
based on the given parameters.

\section{Self-determined Communities in Social Networks}
\label{sec:sn}
In this section we present a natural notion of self-determined communities in social networks and discuss how our analysis sheds light on the notion of $(\alpha, \beta)$-clusters~\cite{mishra-conf,mishra,hop-11}. We assume that the input is a directed graph $G = (V,E)$ and  
for a vertex $i$ we denote by $d_i$ its out-degree.
As discussed in Section~\ref{sec:intro}, given a social network we can consider the affinity system induced by direct lifting and then
consider self-determined communities in that affinity  system. This leads to the following very natural notion: 
\begin{definition}
\label{prop:self-stable:graph1}
 Let $G = (V,E)$
be a directed graph and let $\fracconnect, \alpha, \beta \geq 0$ with $\beta < \alpha \leq 1$.  Consider the affinity system $(V, {a_1, \ldots, a_n})$ where $a_{i,j} = w_{i,j}$ if $(i,j) \in E$ and  $a_{i,j} = 0$ otherwise.
A subset  $S \subseteq V$ is a $(\fracconnect, \alpha,\beta)$ self-determined community in $G$ if it is a
  $(\fracconnect,\alpha,\beta)$ weighted self-determined community in $(V, {a_1, \ldots, a_n})$.
\end{definition}

Note that when evaluating a community of size $t$ each
node $i$ is allowed a total vote of at most $\theta t$. 
One natural way to achieve this is to only fractionally count edges from
high-degree nodes $i$, giving them weight $\min(\fracconnect t/d_i,1)$ when evaluating a community of size $t$ in the induced weighted affinity system.

The  community notion introduced in~\cite{mishra-conf,mishra} is as follows:
\begin{definition}
\label{MSST} Let $\alpha, \beta$ with $\beta < \alpha \leq 1$ be two positive parameters.
Given an undirected graph, $G = (V,E)$, where every vertex has a self-loop, a subset  $S \subseteq V$
is an $(\alpha, \beta)$-cluster if $S$ is:
\begin{enumerate}
\item[(1)] Internally Dense: $\forall i \in S$, $|E(i,S)| \geq \alpha |S|$.
\item[(2)]  Externally Sparse: $\forall i \notin S$, $|E(i,S)| \leq \beta |S|$.
\end{enumerate}
\end{definition}

The $(\alpha, \beta)$-cluster  notion resembles our community notion in
Definition~\ref{prop:self-stable:graph1}. In particular, in the case
where the graph is undirected, Definition~\ref{prop:self-stable:graph1} is
similar to Definition~\ref{MSST}, except that in the case of our Definition~\ref{prop:self-stable:graph1} each node $i$ is allowed a total vote of at most $\theta t$. As discussed above one way to achieve this is to only fractionally count edges from
high-degree nodes $i$, giving them weight
$\min(\fracconnect|S|/d_i,1)$.
This distinction is
crucial for getting polynomial time algorithms.
From our results in the previous sections we have that every graph has only a polynomial number of communities satisfying Definition~\ref{prop:self-stable:graph1} and
moreover, we can find all of  them in polynomial time.
In contrast, as we show, there exist graphs with a superpolynomial number of $(\alpha, \beta)$-clusters.


\begin{theorem}
\label{lb-mishra-com}
For any constant $\epsilon$, $\alpha=1$, $\alpha-\beta=1/2 -\epsilon$, there exist instances with $n^{\Omega(\log n)}$ $(\alpha,\beta)$-clusters. \end{theorem}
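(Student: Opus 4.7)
The plan is to use the Erd\H{o}s--R\'enyi random graph $G(n,1/2)$ (augmented with the self-loops required by Definition~\ref{MSST}) and compute, via the first-moment method, the expected number of $(1,\,1/2+\epsilon)$-clusters of a carefully chosen size $k=\Theta(\log n)$. First I would fix $k = c \log_2 n$ for a constant $c$ to be tuned to $\epsilon$. For any $S\subseteq V$ with $|S|=k$, the cluster event decomposes into (a) ``$S$ is a clique'', which depends only on the $\binom{k}{2}$ random edges inside $S$, and (b) ``every $v\notin S$ has $|N(v)\cap S|\le(1/2+\epsilon)k$'', which depends only on the $k(n-k)$ random edges across the cut. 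These two edge sets are disjoint, so the events are independent; likewise, the variables $|N(v)\cap S|$ are mutually independent over $v\notin S$ and each is distributed as $\mathrm{Binom}(k,1/2)$. Hence $\Pr[S\text{ is a cluster}] = 2^{-\binom{k}{2}}(1-p)^{n-k}$, where $p=\Pr[\mathrm{Binom}(k,1/2)>(1/2+\epsilon)k]\le e^{-2\epsilon^2 k}$ by Hoeffding.

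Next I would choose $c$ in the window $(\ln 2/(2\epsilon^2),\,2)$. The lower endpoint forces $e^{-2\epsilon^2 k}\le 1/(2n)$, so $(1-p)^{n-k}\ge 1/2$. The upper endpoint ensures the standard estimate $\log_2\binom{n}{k}-\binom{k}{2} = c(1-c/2)(\log_2 n)^2 - o((\log n)^2)$ is $\Omega((\log n)^2)$, so $\binom{n}{k}2^{-\binom{k}{2}}=n^{\Omega(\log n)}$. Combining, $\E[\#\text{clusters of size }k]\ge \tfrac{1}{2}\binom{n}{k}2^{-\binom{k}{2}}=n^{\Omega(\log n)}$, so some realization of $G(n,1/2)$ achieves this count, establishing the theorem.

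The main obstacle is arranging the two conditions on $c$ simultaneously: forcing external vertices to dip strictly below the mean $k/2$ of their in-$S$ neighbor count costs Hoeffding mass $e^{-2\epsilon^2 k}$, which must beat a union bound of cost $n$, pushing $c$ upward, while $c$ is simultaneously capped by the classical clique-count threshold $c<2$. The window is non-empty precisely when $\epsilon > \sqrt{\ln 2}/2\approx 0.416$, which comfortably accommodates the paper's headline example $\alpha-\beta=0.01$ (i.e.\ $\epsilon\approx 0.49$) flagged in the introduction; for smaller $\epsilon$ one would refine $G(n,1/2)$ into a planted model that makes more small cliques available without inflating external co-neighborhoods, but the first-moment counting sketched above is the guiding principle.
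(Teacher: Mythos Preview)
Your $G(n,1/2)$ first-moment calculation is essentially the paper's argument specialized to edge density $p=1/2$, and it is correct in the narrow window $\sqrt{\ln 2}/2<\epsilon<1/2$ that you identify. But the theorem asserts $n^{\Omega(\log n)}$ clusters \emph{for every} constant $\epsilon$, and for small $\epsilon$ your two constraints on $c$ are indeed incompatible. The hand-waved fix---``refine into a planted model that makes more small cliques available''---is neither worked out nor the right diagnosis: the bottleneck at small $\epsilon$ is not a shortage of cliques but the weakness of the outside-vertex tail bound, since you are asking $\mathrm{Binom}(k,1/2)$ to fall only $\epsilon k$ below its mean.

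The paper's remedy is to \emph{sparsify} rather than plant: take $G(n,p)$ with $p=2^{-l}$ for a constant $l>12/\epsilon$, and set $k=\tfrac{2(1-\delta)}{l}\log_2 n$ with $\delta\le\epsilon/4$. This $k$ still sits just below the clique number of $G(n,2^{-l})$, so $\binom{n}{k}p^{\binom{k}{2}}=n^{\Omega(\log n)}$ exactly as in your computation. What changes is the tail: the mean outside-degree is now $pk=k/2^{l}$, far below the threshold $(\tfrac12+\epsilon)k$, and the crude bound $\Pr[\mathrm{Binom}(k,p)>(\tfrac12+\epsilon)k]\le 2^{k}\, p^{(1/2+\epsilon)k}$ already beats $1/(2n)$ once $l$ is large enough as a function of $\epsilon$ alone. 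Lowering $p$ thus opens the window for every constant $\epsilon>0$; no planting is needed.
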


\begin{proof}
Consider the graph $G_{n,p}$ with $p=1/2^l$. 
 Consider all ${n \choose k}$ sets of size $k=  \frac{2 \log n}{l} (1-\delta)$, where
 $\delta$ is a constant (determined later).  For each such set $S$,
 the probability it is a clique is $$p^{k \choose 2} \geq (1/2)^{\ell
 k^2/2} = (1/2)^{2 \log^2 n (1-\delta)^2/\ell} = n^{-k(1-\delta)}.$$
 We now want to show that conditioned on $S$ being a clique, it is
 also an $(\alpha,\beta)$-cluster with probability at least $1/2$.  This will imply
 that the {\em expected} number of $(\alpha,\beta)$-clusters is at least $$0.5 {n
 \choose k} n^{-k(1-\delta)} = n^{\Omega(\log n)}.$$
Fix such set of size $k=  \frac{2 \log n}{l} (1-\delta)$.
The probability that a node outside is connected to more than a
$(1/2+\epsilon)$-fraction of the set is upper bounded by
$$2^k \left(\frac{1}{2^l}\right)^{\frac{k}{2}(1+\epsilon)}  \leq n ^{ \frac{2}{l}} 2 ^{-\frac{lk}{2} (1+\epsilon)}= n ^{ \frac{2}{l}}  n^{- (1+\epsilon)(1-\delta)}.$$
By imposing $\frac{2}{l} - (1+\epsilon)(1-\delta) < -1 +\log_n (2)$, we get that this probability is upper bounded by $1/(2n)$.
So by union bound over all nodes we then get the desired result.
We need to impose $(1+\epsilon)(1-\delta) -\frac{2}{l} > 1 + \log_n (2)$. This is true for $\delta \leq \epsilon/4$ and $l > 12/\epsilon$ and $n$ large enough.
\end{proof}

\comment{
\begin{sketch}
Consider the graph $G_{n,p}$ with $p=1/2^l$. 
 Consider all ${n \choose k}$ sets of size $k=  \frac{2 \log n}{l} (1-\delta)$, where
 $\delta \leq \epsilon/4$ and $l > 12/\epsilon$ and $n$ large enough.
  For each such set $S$,
 the probability it is a clique is $p^{k \choose 2} \geq (1/2)^{\ell
 k^2/2} = (1/2)^{2 \log^2 n (1-\delta)^2/\ell} = n^{-k(1-\delta)}.$
 We can also show that conditioned on being a clique, $S$ is
 also an $(\alpha,\beta)$-cluster with probability at least $1/2$.  This will imply
 that the {\em expected} number of $(\alpha,\beta)$-clusters is at least $0.5 {n
 \choose k} n^{-k(1-\delta)} = n^{\Omega(\log n)},$ as desired.
\end{sketch}


We note that for  certain range of parameters our bounds in Theorem~\ref{basic-one} this improves over the general upper bound
given in~\cite{mishra-conf,mishra}. Moreover, we show that even in graphs with only one $(\alpha,\beta)$-cluster,
  finding this cluster
is at least as hard solving the {\em planted clique problem} for planted
cliques of size $O(\log n)$, which is believed to
be hard (see, e.g., Hazan and Krauthgamer~\cite{clique-nash}).
In the {\em planted clique problem} problem,
the input is a graph on $n$ vertices drawn at random
from the following distribution $G_{n,1/2,k}$ pick a random
graph from $G_{n,1/2}$ and plant in it a clique of size
$k = k(n)$. The goal is to recover the planted clique
(in polynomial time), with probability at least (say)
$1/2$ over the input distribution.
The hidden clique
problem becomes only easier as $k$ gets larger, and the
best polynomial-time algorithm to date~\cite{aks98}, solves the problem
whenever $k = \Omega(\sqrt{n})$.  Finding a hidden clique for $k=c
\log n$ for any $c$ is believed to be hard. The decision version of this problem is also believed to be hard.
We can show (see Appendix~\ref{appendix-graph} for a proof):

\begin{theorem}
\label{hardness-one}
For sufficiently small (constant) $\gamma$ and $\epsilon$, with probability at least $1-3/n$, we have that:
(1) the graph
$G_{n,1-\gamma -\epsilon}$ has no $(1,1-\gamma)$ clusters; and (2) a hidden
clique of size $\frac{1}{\epsilon^2} \log n$ is an $(1,1-\gamma)$ cluster.
Therefore, finding even one such cluster is as hard as the hidden clique problem.
\end{theorem}
}

We note that for  certain range of parameters our bounds in Theorem~\ref{basic-one} this improves over the general upper bound
given in~\cite{mishra-conf,mishra}. Moreover, we show that even in graphs with only one $(\alpha,\beta)$-cluster,
 we show that finding this cluster
is at least as hard solving the {\em planted clique problem} for planted
cliques of size $O(\log n)$, which is believed to
be hard (see, e.g., Hazan and Krauthgamer~\cite{clique-nash}).

\smallskip
\noindent {\bf The Hidden Clique Problem:} In this problem,
the input is a graph on n vertices drawn at random
from the following distribution $G_{n,1/2,k}$ pick a random
graph from $G_{n,1/2}$ and plant in it a clique of size
$k = k(n)$. The goal is to recover the planted clique
(in polynomial time), with probability at least (say)
$1/2$ over the input distribution. The clique
is hidden in the sense that its location is adversarial
and not known to the algorithm. 
The hidden clique
problem becomes only easier as $k$ gets larger, and the
best polynomial-time algorithm to date~\cite{aks98}, solves the problem
whenever $k = \Omega(\sqrt{n})$.  Finding a hidden clique for $k=c
\log n$ for any $c$ is believed to be hard. The decision version of this problem is also believed to be hard.

We begin with a simpler result that finding the {\em
approximately-largest} $(\alpha,\beta)$-cluster is at least as hard as the hidden
clique problem.

\begin{theorem}
Suppose that for $\alpha=1$ and $\beta-\alpha=1/4$, there was an algorithm that for
some constant $c$ could find an $(\alpha,\beta)$-cluster of
size at least
 $MAX/c$, where $MAX$ is size of the largest community with those
 parameters. Then, that algorithm could be used to distinguish (1) a
  random graph $G_{n,1/2}$  from (2) a random graph $G_{n,1/2}$  in which a clique of size
 $2c \log_2(n)$ has been planted.
\end{theorem}
\begin{proof}
We can show that with probability at least $1-1/n$ the largest clique in $G_{n,1/2}$
  largest clique has size $2\log(n)$, which implies the largest
$(\alpha,\beta)$ cluster (with $\alpha=1$ and $\beta-\alpha=1/4$) has size at {\em most} $2 \log(n)$.
On the other hand we can also show that with probability at least $1-1/n$, for $c \geq 8 \ln 2$ the planted clique of size  $2c \log_2(n)$ {\em is} a cluster with
  these parameters.  
  Thus, under the assumption that distinguishing these two cases is
 hard, the problem of finding the approximately-largest $(\alpha,\beta)$-cluster is
 hard.
\end{proof}

We now show that in fact, even finding a single $(\alpha,\beta)$-cluster is as hard
as the hidden clique problem.  Here, instead of $G_{n,1/2}$ we will
use $G_{n,p}$ for constant $p > 1/2$.  Note that the hidden clique
problem remains hard in this setting as well.\footnote{In particular,
if it were easy, then one could solve the decision version of the
hidden clique problem for $G_{n,1/2}$ by first adding additional
random edges and then solving the problem for $G_{n,p}$.  We assume
here that the planted clique has size greater than the largest clique
that would be found in $G(n,p)$.}

\begin{theorem} \label{hardness-one}
For sufficiently small (constant) $\gamma$ and $\epsilon$, with probability at least $1-3/n$, we have that:
(1) the graph
$G_{n,1-\gamma -\epsilon}$ has no $(1,1-\gamma)$ clusters; and (2) a hidden
clique of size $\frac{1}{\epsilon^2} \log n$ is an $(1,1-\gamma)$ cluster.
Therefore, finding even one such cluster is as hard as the hidden clique problem.
\end{theorem}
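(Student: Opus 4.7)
My plan is to establish assertions (1) and (2) separately and then combine them via the Hidden Clique reduction already sketched in the surrounding text.

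Assertion (2) is a routine Hoeffding calculation. Let $P$ be the planted clique of size $K = \lceil (\log n)/\epsilon^2 \rceil$. Internal density is immediate, since $|E(i,P)| = K$ for every $i \in P$. For each $v \notin P$, the variable $|E(v,P)|$ is $\mathrm{Bin}(K,p)$ with $p = 1-\gamma-\epsilon$, so its mean $pK$ is a distance $\epsilon K$ below the threshold $(1-\gamma)K$; Hoeffding gives $\Pr[|E(v,P)| > (1-\gamma)K] \le e^{-2\epsilon^2 K} \le 1/n^2$, and a union bound over the $n-K$ external vertices yields external sparsity with probability $\ge 1-1/n$.

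Assertion (1) is the delicate part: first-moment analysis of the number of $(1,1-\gamma)$-clusters in $H \sim G_{n,p}$. Using the fact that the events ``$S$ is a clique'' and ``$v\notin S$ is externally sparse'' depend on disjoint edge sets, the expected number of clusters of size $k$ equals $\binom{n}{k}p^{\binom{k}{2}}q_k^{\,n-k}$, where $q_k := \Pr[\mathrm{Bin}(k,p)\le(1-\gamma)k]$ and $r_k := 1-q_k$. Set $k^\ast := 2\log n/\log(1/p)$, the clique threshold. For $k > k^\ast$, the ratio $\binom{n}{k+1}p^{\binom{k+1}{2}}/\binom{n}{k}p^{\binom{k}{2}}$ is geometrically small, giving $\sum_{k>k^\ast}\binom{n}{k}p^{\binom{k}{2}} = O(1/n)$ (and a fortiori the same bound on the cluster count). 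For $k \le k^\ast$ I would invoke the large-deviations lower bound $r_k \ge \tfrac{1}{2}\exp(-k\cdot D(1-\gamma\,\|\,p))$ on the binomial upper tail. Since for small $\gamma,\epsilon$ the KL-divergence satisfies $D(1-\gamma\,\|\,p) \asymp \epsilon^2/(\gamma+\epsilon) = \Theta(\epsilon^2/\gamma)$, one has $k^\ast\cdot D \asymp (\epsilon/\gamma)^2 \log n$, so $r_k \ge n^{-(\epsilon/\gamma)^2 - o(1)}$ uniformly over $k \le k^\ast$. Provided $\epsilon/\gamma$ is a small enough constant --- this is where the ``sufficiently small'' hypothesis is used --- the exponent is bounded away from $1$, so $(n-k)r_k \ge n^{\Omega(1)}$ and $q_k^{\,n-k} \le \exp(-n^{\Omega(1)})$. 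This super-exponential gain easily beats $\binom{n}{k}p^{\binom{k}{2}} \le \exp(O(\log^2 n/\log(1/p)))$, and summing over the $O(\log n)$ relevant values of $k$ yields $\Pr[\exists\text{ cluster}] \le 2/n$. Combining with Assertion~(2) gives the claimed $1-3/n$. The reduction is then as sketched: given an instance $H$ of Hidden Clique on $G_{n,1/2}$ (with or without a planted $K$-clique), independently add each non-edge with probability $2p-1$; the augmented graph is distributed as $G_{n,p,K}$ or $G_{n,p}$, so by (1) and (2) it has a $(1,1-\gamma)$-cluster iff $H$ had the plant, and any algorithm that outputs one such cluster decides Hidden Clique at clique size $\Theta(\log n)$.

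The main obstacle is the range $k \approx k^\ast$. Here the expected number of cliques is $\Theta(1)$ and the per-vertex external-sparsity violation probability is sub-constant, so neither ``no cliques'' nor ``some vertex violates'' alone suffices --- we must combine them, and the KL-divergence lower bound on the binomial upper tail is exactly what glues them together. This is also precisely the step that pins down how small $\gamma$ and $\epsilon$ must be, essentially because we need $k^\ast \cdot D(1-\gamma\,\|\,p) < (1-\Omega(1))\log n$, forcing $\epsilon$ to be small relative to $\gamma$.
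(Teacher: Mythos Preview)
Your proof is correct and follows essentially the same strategy as the paper's: split at the clique threshold $k^\ast = 2\log n/\log(1/p)$, dispose of $k>k^\ast$ by the standard expected-clique-count bound, and for $k\le k^\ast$ lower-bound the per-vertex violation probability $r_k$ so that $(1-r_k)^{n-k}$ kills the union bound over candidate sets --- you do this via the Sanov/KL lower tail bound and identify the constraint as $(\epsilon/\gamma)^2<1-\Omega(1)$, while the paper carries out the identical computation by hand with Stirling and simply checks that $\gamma=0.1$, $\epsilon=0.01$ gives $r_k\ge 1/\sqrt n$. The only cosmetic slip is the prefactor in $r_k\ge\tfrac12 e^{-kD}$: the sharp lower bound carries a $\Theta(1/\sqrt k)$ or $1/(k{+}1)$ polynomial factor, but since $k=O(\log n)$ this is already absorbed in your $n^{-o(1)}$.
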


\begin{proof}
Consider  $G_{n,p}$ for $p=1-\gamma -\epsilon$.
We start by showing that with probability at least $1-1/n$  the size of the largest clique is at most
$\frac{-2 \ln n}{\ln (1-\gamma-\epsilon)}$. For any $k$, the probability that there exists a clique of size $k$ is at most
$${n \choose k} p^{k \choose 2} \leq \frac{n^k}{k!} p^{k^2/2} p^{-k/2}.$$
For $k=\frac{-2 \ln n}{\ln (1-\gamma-\epsilon)}=-2 \log_p {n}$, this is
$$\frac{p^{-k/2}}{k!} n ^{-2 \log_p {n}} p^{2 (\log_p {n})^2}= \frac{p^{-k/2}}{k!} =\frac{n}{k!} = o(\frac{1}{n}).$$

This immediately implies
that with  probability at least $1-1/n$,  $G_{n,p}$ does not contains any $(1,1-\gamma)$ clusters
of size greater than $\frac{-2 \ln n}{\ln (1-\gamma-\epsilon)}$.

We now show that with probability at least $1-1/n$,  $G_{n,p}$ does not contain any $(1,1-\gamma)$ clusters of size
$ \leq \frac{-2 \ln n}{\ln (1-\gamma-\epsilon)}$. For this, we will show that for any set $S$ of size $ \leq \frac{-2 \ln n}{\ln (1-\gamma-\epsilon)}$ and any node $v$ not in $S$, the probability that
$v$ connects to at least $(1-\gamma)|S|$ nodes inside $S$ is at
least $1/\sqrt{n}.$  Because these events are independent over the different nodes $v$, this implies that the probability that no node $v$ outside $S$ connects to at least $(1-\gamma)|S|$ nodes inside $S$ is at most
$\left(1-\frac{1}{\sqrt{n}}\right)^{n-k} \leq e^{-\sqrt{n}/2}$.
By union bound over all sets $S$ of size at most $\frac{-2 \ln n}{\ln (1-\gamma-\epsilon)}$, this will imply that the probability there exits a $(1,1-\gamma)$ cluster of size at most $\frac{-2 \ln n}{\ln (1-\gamma-\epsilon)}$ is at most $1/n$.

Consider a set $S$ of size $k$ and a node $v$ outside $S$.
The probability that $v$ connects to more than $(1-\gamma)k$ nodes inside $S$ is at least
$$ {k \choose \gamma k} (1-\gamma-\epsilon)^{(1-\gamma)k}(\gamma+\epsilon)^{\gamma k} \geq \frac{1}{k} \left(\frac{(1-\gamma) k e}{\gamma k}\right)^{\gamma k}(1-\gamma-\epsilon)^{(1-\gamma)k}(\gamma+\epsilon)^{\gamma k}.$$
This follows from the fact that $${k \choose \gamma k}= \frac{k (k-1) \ldots (k-\gamma k +1) }{(\gamma k)!} \geq \frac{((1-\gamma)k)^{\gamma k}}{k (\gamma k/e)^{\gamma k}}= \frac{1}{k} \left(\frac{(1-\gamma) k e}{\gamma k}\right)^{\gamma k}, $$ where we use the fact that
$(\gamma k)! < 2 \sqrt{2 \pi \gamma k}  (\gamma k/e)^{\gamma k} < k (\gamma k/e)^{\gamma k}$.

So, the probability that $v$ connects to more than $(1-\gamma)k$ nodes inside $S$ is at least
$$\frac{1}{k} (1-\gamma-\epsilon)^{k} \left[ \frac{1-\gamma}{\gamma} \cdot \frac{\gamma+\epsilon}{1-\gamma-\epsilon} e \right]^{\gamma k}\geq [(1-\gamma-\epsilon) e ^{\gamma}]^{k}\frac{1}{k}.$$

This is decreasing with $k$ and thus it suffices to consider $k=\frac{-2 \ln n}{\ln (1-\gamma-\epsilon)}$.
 For this $k$,
 we get  that the probability that $v$ connects to more than $(1-\gamma)k$ nodes inside $S$ is at least
$$\frac{1}{k} e^{-2 \ln n} e ^{- \frac{2 \gamma \ln n}{\ln(1- \gamma -\epsilon)}}= \frac{1}{k} n^{-2 -\frac{2 \gamma}{\ln(1- \gamma -\epsilon)}}.$$
We want this to be greater than $1/\sqrt{n}$, and thus it suffices to have $-2-\frac{2\gamma}{\ln(1-\gamma)}>-0.4$. This holds for $\gamma = 0.1, \epsilon = 0.01.$

Finally, it is easy to show that with probability at least $1-1/n$, a hidden
 clique of size $k=\frac{1}{\epsilon^2}\ln n$  is a  $(1,1-\gamma)$ cluster.
This follows by noticing that every vertex outside the clique has in expectation
$k(1-\gamma-\epsilon)$ connections insides the clique, so by Hoeffding bounds,
the probability it has more than $k(1-\gamma-\epsilon)
+ \epsilon k = k(1-\gamma)$ neighbors inside the clique is at most $1/n^2$.
By union bound, we get that with probability at least $1-1/n$ every vertex outside the clique has at most $k(1-\gamma)$ neighbors inside the clique so the
planted clique is a community as desired.
\end{proof}

\comment{As mentioned in the introduction, a social network is a partial realization (a projection) that we can
observe of social interactions induced by some underlying latent set
of affinities.  If this network is very sparse, then the affinity
system produced by direct lifting may not fully capture the
underlying affinities, and one may instead want to use a different
approach to extract an affinity system based on different beliefs
about how this network arose. One such example is Shortest Path Lifting, where
we define $a_{i,j}$ as $1/d_{i,j}$, where $d_{i,j}$ the shortest path distance from $i$ to $j$ in the underlying network;
this corresponds to the belief that the social
network serves as an approximate spanner of the underlying
affinity system \cite{spanner:book}.
We discuss additional ways of lifting corresponding to other natural beliefs in Section~\ref{discussion}.
An interesting open question would be to prove guarantees on the
closeness of communities in the lifted system to those in the latent
affinity system for various processes by which social networks might
emerge as projections of a latent affinity system.
}
\bibliographystyle{plain}
\bibliography{community}

\begin{thebibliography}{10}

\bibitem{aks98}
N.~Alon, M.~Krivelevich, and B.~Sudakov.
\newblock Finding a large hidden clique in a random graph.
\newblock {\em Random Structures Algorithms}, 1998.

\bibitem{AndersenChungLang}
R.~Andersen, F.~Chung, and K.~Lang.
\newblock Local graph partitioning using pagerank vectors.
\newblock In {\em 47th Annual Symposium on Foundations of Computer Science},
  2006.

\bibitem{arora11}
A.~Arora, T.~Ge, S.~Sachdeva, and G.~Schoenebeck.
\newblock Finding overlapping communities in social networks: Toward a rigorous
  approach.
\newblock Manuscript, 2011.

\bibitem{BBV08}
{M.-F.} Balcan, A.~Blum, and S.~Vempala.
\newblock A discriminative framework for clustering via similarity functions.
\newblock In {\em Proceedings of the 40th ACM Symposium on Theory of
  Computing}, 2008.

\bibitem{bob07}
L.~M. Busse, P.~Orbanz, and J.~M. Buhmann.
\newblock Cluster analysis of heterogeneous rank data.
\newblock In {\em Proceedings of the 24th Annual International Conference on
  Machine Learning}, 2007.

\bibitem{CGTS99}
M.~Charikar, S.~Guha, E.~Tardos, and D.~B. Shmoys.
\newblock A constant-factor approximation algorithm for the k-median problem.
\newblock In {\em Proceedings of the Thirty-First Annual ACM Symposium on
  Theory of Computing}, 1999.

\bibitem{clsw10}
W.~Chen, Z.~Liu, X.~Sun, and Y.~Wang.
\newblock A game-theoretic framework to identify overlapping communities in
  social networks.
\newblock {\em Data Mining and Knowledge Discovery Journal, special issue on
  ECML PKDD}, 2010.

\bibitem{cksw}
D.~Cheng, R.~Kannan, S.~Vempala, and G.~Wang.
\newblock A divide-and-merge methodology for clustering.
\newblock In {\em Proceedings of the ACM Symposium on Principles of Database
  Systems}, 2005.

\bibitem{Doyle:book}
P.~Doyle and J.~Snell.
\newblock {\em Random walks and electric networks}.
\newblock Mathematical Assoc. of America, 1984.

\bibitem{dudahart:book01}
R.~Duda, P.~E. Hart, and D.~G. Stork.
\newblock {\em Pattern classification}.
\newblock Wiley, 2001.

\bibitem{clique-nash}
E.~Hazan and R.~Krauthgamer.
\newblock How hard is it to approximate the best nash equilibrium.
\newblock {\em SIAM Journal on Computing}, 2011.

\bibitem{hop-11}
J.~He, J.~E. Hopcroft, H.~Liang, S.~Suwajanakorn, and L.~Wang.
\newblock Detecting the structure of social networks using ($\alpha$,
  $\beta$)-communities.
\newblock In {\em 8th International Conference on Algorithms and Models for the
  Web-graph}, 2011.

\bibitem{jms06}
K.~Jain, M.~Mahdian, and A.~Saberi.
\newblock A new greedy approach for facility location problems.
\newblock In {\em Proceedings of the 34th Annual ACM Symposium on Theory of
  Computing}, 2002.

\bibitem{jure09}
J.~Leskovec, K.~Lang, A.~Dasgupta, and M.~Mahoney.
\newblock Community structure in large networks: Natural cluster sizes and the
  absence of large well-defined clusters.
\newblock {\em Internet Mathematics}, 2009.

\bibitem{jure10}
J.~Leskovec, K.~Lang, and M.~Mahoney.
\newblock Empirical comparison of algorithms for network community detection.
\newblock In {\em ACM WWW International conference on World Wide Web}, 2010.

\bibitem{mishra-conf}
N.~Mishra, R.~Schreiber, I.~Stanton, and R.~Tarjan.
\newblock Clustering social networks.
\newblock In {\em 5th International Conference on Algorithms and Models for the
  Web-graph}, 2007.

\bibitem{mishra}
N.~Mishra, R.~Schreiber, I.~Stanton, and R.~Tarjan.
\newblock Finding strongly-knit clusters in social networks.
\newblock {\em Internet Mathematics}, 2009.

\bibitem{spanner:book}
G.~Narasimhan and M.~Smid.
\newblock {\em Geometric Spanning Networks}.
\newblock Cambridge University Press, 2007.

\bibitem{modularity}
M.~E.~J. Newman.
\newblock Modularity and community structure in networks.
\newblock {\em Proceedings of the National Academy of Sciences}, 2006.

\bibitem{SpielmanTengPrecon}
D.~A. Spielman and S.-H. Teng.
\newblock Nearly-linear time algorithms for graph partitioning, graph
  sparsification, and solving linear systems.
\newblock In {\em Proceedings of the thirty-sixth annual {ACM} Symposium on
  Theory of Computing}, 2004.

\bibitem{SpielmanTengCuts}
D.~A. Spielman and S.-H. Teng.
\newblock A local clustering algorithm for massive graphs and its application
  to nearly-linear time graph partitioning.
\newblock {\em CoRR}, abs/0809.3232, 2008.
\newblock Available at \texttt{http://arxiv.org/abs/0809.3232}. Submitted to
  SICOMP.

\bibitem{condkonst}
K.~Voevodski, S.~Teng, and Y.~Xia.
\newblock Finding local communities in protein networks.
\newblock {\em BMC Bioinformatics}, 2009.

\end{thebibliography}


\appendix
\section{Additional Proofs}
\subsection{Finding Self-determined Communities in Quasi-Polynomial Time}
\label{additional}
We present here a simple quasi-polynomial algorithm for enumerating all the self-determined communities.
\begin{theorem}
\label{basic-one}
For any $\fracconnect$, $\alpha$, $\beta$, $\gamma=\alpha-\beta$, there are  $n^{O(\log{n}/\gamma^2)}$ sets which are  $(\fracconnect,\alpha,\beta)$ (weighted) self-determined communities.
All such communities can be found by using Algorithm~\ref{alg:trivial} with parameters $\fracconnect$, $\alpha$, $\beta$, $\gamma=\alpha-\beta$ and $k(\gamma)= 2 \log {(4 n)} /\gamma^2$.
\end{theorem}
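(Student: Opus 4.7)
The plan is to prove the statement by a probabilistic argument that certifies each self-determined community by a small random sample of its members, and then enumerate over all such possible samples. Fix any $(\fracconnect,\alpha,\beta)$-self-determined community $S$ of size $t$. I would draw a multiset $U$ of $k(\gamma) = 2\log(4n)/\gamma^2$ elements independently and uniformly from $S$, and associate with $U$ the ``thresholded'' set
\[
T(U) \;=\; \{\, i \in V : \vn^{\fracconnect t}_U(i) \geq (\alpha - \gamma/2)|U| \,\}.
\]
The key claim is that with positive probability $T(U)=S$, which both proves existence of a short certificate and justifies the enumeration algorithm.

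The main calculation is a Hoeffding bound applied element by element. For each $i \in S$ the expected fraction of $U$ voting for $i$ equals $\phi^\fracconnect_S(i)/t \geq \alpha$, and for each $j\notin S$ it is at most $\beta$. Since votes from independently drawn samples give i.i.d.\ $\{0,1\}$ indicators, Hoeffding gives deviation probability at most $2\exp(-2(\gamma/2)^2 k(\gamma)) \leq 1/(2n)$. Union bounding over all $n$ elements, with probability at least $1/2$ every $i\in S$ gets at least $(\alpha-\gamma/2)|U|$ votes and every $j\notin S$ gets at most $(\alpha-\gamma/2)|U|$ votes, so $T(U)=S$. Hence such a certifying multiset $U\subseteq S$ exists for every self-determined community. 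In the weighted model the same argument applies verbatim with $\vn$ replaced by the weighted count $\w$, since each sampled member contributes a weight in $[0,1]$ and Hoeffding still applies.

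Now the enumeration is straightforward: the number of multisets of size $k(\gamma)$ drawn from $V$ is at most $n^{k(\gamma)} = n^{O(\log n/\gamma^2)}$. Algorithm~\ref{alg:trivial} iterates over all of them, builds $T(U)$ for each, and keeps those that actually satisfy Definition~\ref{prop:self-stable} (equivalently Definition~\ref{prop:self-stable:weighted} in the weighted setting), which can be verified in $\poly(n)$ time. By the previous paragraph every self-determined community arises as $T(U)$ for some such $U$, so the output list contains all communities. Distinct communities necessarily come from distinct certifying multisets, so the total number of self-determined communities is bounded by $n^{O(\log n/\gamma^2)}$ as claimed.

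The only real obstacle is the weighted case: the threshold $(\alpha-\gamma/2)|U|$ must separate inside- from outside-$S$ elements based on vote weights rather than vote counts, so I must be slightly careful that the capped weights $\w_{s,i}^{\fracconnect t}$ used in sampling are the same ones that define $\w^\fracconnect_S$, ensuring the sampled average is an unbiased estimator of $\w^\fracconnect_S(i)/t$. Once that is observed, Hoeffding for bounded $[0,1]$-valued variables closes the gap and the remainder of the argument is identical to the unweighted case.
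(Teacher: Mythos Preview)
Your proposal is correct and follows essentially the same approach as the paper: both use the probabilistic method with Hoeffding's inequality on a uniform sample of size $k(\gamma)=2\log(4n)/\gamma^2$ from $S$, union bound over all $n$ elements to conclude that $T(U)=S$ with probability at least $1/2$, and then enumerate over all multisets (together with the size parameter $t$, as in Algorithm~\ref{alg:trivial}) to obtain the $n^{O(\log n/\gamma^2)}$ bound. Your treatment of the weighted case via $[0,1]$-valued Hoeffding is likewise what the paper intends.
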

\begin{proof}
Fix a $(\fracconnect,\alpha,\beta)$ (weighted) self-determined community $S$. We show that there exists a multiset $U$ of size $k(\gamma)=2 \log {(4 n)} /\gamma^2$ such that the set $S_U$ of points in $V$ that receive at least
     $(\alpha-\gamma/2)|U|$ amount of vote from points in $U$ is identical to $S$.
The proof follows simply by the probabilistic method.
Let us fix a point $i \in V$.
By Hoeffding, if we draw  a set $U$ of $ 2 \log {(4 n)} /\gamma^2$ uniformly at random from $S$, then with probability $1- 1/(2n)$,
 the average amount of vote that $i$ receives from points in $U$  is within $\gamma/2$ of the average amount of vote that $i$ receives from points in $S$.
 By union bound,  we get that with probability at least $1/2$, for all points in $V$  the average amount of vote that they receive from points in $U$  is within $\gamma/2$ of the average amount of vote that they receive from points in $S$.
 Using this together with the definition of a self-determined community, we get that with probability $1/2$ we obtain $S_U=S$ for $U$ of size  $ 2 \log {(4 n)} /\gamma^2$ drawn uniformly at random from $S$. This then implies that there must exist a multiset $U$ of size $k(\gamma)$ such that $S_U=S$.

Since in Algorithm~\ref{alg:trivial} we exhaustively search over all multisets $U$ (of point from $V$) of size $k(\gamma)$, we
clearly get the list $L$ we output contains all the
 $(\fracconnect,\alpha,\beta)$ (weighted) self-determined communities. Moreover, clearly,  $n^{O(\log{n}/\gamma^2)}$ is an upper bound
on the number of $(\fracconnect,\alpha,\beta)$ (weighted) self-determined communities.
\end{proof}

\begin{algorithm}
  \caption{Algorithm for enumerating self-determined communities}
  {\bf Input:}   Affinity system $(V, \Pi)$, parameters $\fracconnect$, $\alpha$, $\beta$, $\gamma$;  $k(\gamma)$;

\begin{enumerate}
\item[$\bullet$] Set $L=\emptyset$.

\item[$\bullet$] Exhaustively search over all multisets  $U$  with elements from $V$ of size $k(\gamma)$.
\begin{enumerate}
\item[$\bullet$] For $t=1$ to $n$ (determining the meaning of ``vote for'') do: 
  \begin{enumerate}
  \item[$\bullet$] Let $S_U$ be the subset of points in $V$ that receive at least
     $(\alpha-\gamma/2)|U|$ amount of vote from points in $U$.
     Add $S_U$ to the list $\listt$.
  \end{enumerate}
  \end{enumerate}
  \item[$\bullet$] Remove from the list $\listt$ all the sets that are not $(\fracconnect,\alpha,\beta)$ weighted self-determined communities.
\end{enumerate}
{\bf Output:}  List of self-determined  communities $L$.
\label{alg:trivial}
\end{algorithm}

\subsection{Additional Proofs in Section~\ref{se:selfdeterminedalgo1}}

 {\sc Theorem~\ref{lb}}
For any constant $\fracconnect \geq 1$, for any $\alpha \geq 2\sqrt{\fracconnect}/n^{1/4}$,
there exists an instance such that the number of $(\fracconnect,\alpha,\beta)$-self-determined communities
 with  $\beta-\alpha=\gamma=\alpha/2$ is $n^{\Omega({1/\alpha})}$.

\smallskip

\begin{proof}
Consider $L=\sqrt{n}$ blobs $B_1$, ..., $B_L$ each of size $\sqrt{n}$. 
Assume that each point ranks the points inside its blob first (in an arbitrary order)
and it then ranks the points outside its blob randomly.
We claim that with non-zero probability for $l \leq n^{1/4}/(2\sqrt{\fracconnect})$ any union of  $l$ blobs  satisfies
the $(\fracconnect,\alpha,\beta)$-self-stability property with parameters $\alpha=1/l$ and $\gamma=\alpha/2$.

Let us fix a set $S$ which is a union of $l$ blobs.
Note that for each point $i$ in $S$, the expected number of points
in $S$ voting for $i$ is $$\sqrt{n}+ (l  \sqrt{n}- \sqrt{n}) \frac{ \fracconnect l \sqrt{n} - \sqrt{n}}{n-\sqrt{n}}.$$ Also, for a point $j$
not in $S$ the expected number of points
in $S$ voting for $j$ is $$l  \sqrt{n} \frac{ \fracconnect l \sqrt{n} - \sqrt{n}}{n-\sqrt{n}} \leq l  \sqrt{n} \frac{\fracconnect l \sqrt{n}}{n}  \leq  \sqrt{n}/4,$$ for $l \leq {n^{1/4}}/{(2\sqrt{\fracconnect})}$.
By Chernoff, we have that the probability that $j$ is voted by more than $\sqrt{n}/2$ is at most $e^{-\sqrt{n}/48}$.

By union bound, we get that the probability that there exists a set $S$ which is a union of $l$ blobs that does not satisfy the $(\fracconnect,\alpha,\beta)$-self-stability property with $\alpha=1/l$, $\gamma=\fracconnect/2$ is at most
 $$n \cdot n^{l/2} \cdot e^{-\sqrt{n}/48}  < 1,$$
for $l \leq {n^{1/4}}/{(2\sqrt{\fracconnect})}$.
 \end{proof}

\medskip

 {\sc Corollary~\ref{number_communities}}
The number of $(\fracconnect,\alpha,\beta)$-self-determined communities  in an affinity system $(V,\Pi)$ satisfies
 $B(n)=n^{ O\left(\log{\left(1/\gamma\right)}/{\alpha} \right)}  \left(\frac{\fracconnect  \log{\left(1/\gamma\right)}}{\alpha}\right)^{O\left(\frac{1}{\gamma^2} \log{\left(\frac{ \fracconnect \log{\left(1/\gamma\right)}}{\alpha \gamma}\right)}\right)}$ and with probability $\geq 1-1/n$ we can find all of them in time $B(n) \poly(n)$.

\smallskip

\begin{proof}
Consider a community size $t$.
For any  $(\fracconnect,\alpha,\beta)$-self-determined community $S$ and let $p_S$ be the probability that $S$ is in the list output by Algorithm in Theorem~\ref{main} with parameters $\fracconnect$, $\alpha$, $\beta$, $t$. By Theorem~\ref{main} we have that $p_S \geq 1-\delta$. By linearly of expectation we have that $\sum_{S}{p_S}$ is the expected number of $(\fracconnect,\alpha,\beta)$-self-determined communities   in the list  output by our algorithm. Combining these, we obtain that
$B(n) (1-\delta) \leq \sum_{S}{p_S} \leq  N_1(\delta) N_2(\delta)$ where   $k_1= \log{\left(16/\gamma\right)}/{\alpha}$,
 $k_2 (\delta)= \frac{8}{\gamma^2} \log{\left(\frac{32 \fracconnect k_1}{\gamma \delta}\right)}$, $N_1(\delta)=n^{k_1}$ and
 $N_2(\delta)= O((2 \fracconnect k_1)^{k_2 (\delta)} \log(1/\delta))$.  By setting $\delta=1/2$, we get the desired bound,  $$B(n)=n^{ O\left(\log{\left(1/\gamma\right)}/{\alpha} \right)}  \left(\frac{\fracconnect  \log{\left(1/\gamma\right)}}{\alpha}\right)^{O\left(\frac{1}{\gamma^2} \log{\left(\frac{ \fracconnect \log{\left(1/\gamma\right)}}{\alpha \gamma}\right)}\right)}.$$

Let $N= N_1(1/2) N_2(1/2) n$.
 By running the algorithm  in Theorem~\ref{main} $2 \log [N]$ times we have that for each  $(\fracconnect,\alpha,\beta)$-self-determined community $S$, the probability that $S$ is not output in any of the runs is at most $(1/2)^{2 \log(N)}\leq 1/N^2$. By union bound, with probability at least $1-1/n$, we output all of them.
\end{proof}

\comment{
\medskip

\noindent {\bf Theorem~~\ref{self-stable:weighted}~}
For any $\fracconnect$, $\alpha$, $\beta$, $\gamma=\alpha-\beta$,
the number of weighted $(\fracconnect,\alpha,\beta)$-self-determined communities is
 $B(n)=(n/\gamma)^{ O\left(\log{\left(1/\gamma\right)}/{\alpha} \right)}  \left(\frac{2\fracconnect  \log{\left(1/\gamma\right)}}{\alpha}\right)^{O\left(\frac{1}{\gamma^2} \log{\left(\frac{ \fracconnect \log{\left(1/\gamma\right)}}{\alpha \gamma}\right)}\right)}$
  and we can find them in time $B(n) \poly(n).$

\begin{proof}
We perform the reduction in Theorem~\ref{reductionweighted} with $\epsilon=\gamma/2$ and use the algorithm in Theorem~\ref{main}
and the bound in Corollary~\ref{number_communities}.
The proof follows from the fact that  the number of vertices in the new instance
has increased by only a factor of $2/\gamma$.
We also note that each set output on the reduced instance
can then be examined on the original weighted affinity system, and kept iff it satisfies the community definition with
original parameters.
\end{proof}
}

\subsection{Self-determined Communities in Multi-faceted Affinity Systems}
\label{appendix:facet}
Recall that a multi-faceted affinity system is a system where each node may have more than one
rankings of other nodes. This may reflect, for example, that a person may have two rankings of
other people, one corresponding to personal friends (in descending order of affinity), and one
of co-workers. Suppose that each element $i$ is allowed to have at most $f$ different rankings
$\pi_i^1,\ldots,\pi_i^f$. We say that the pair $(S,\psi)$ is a multi-faceted community where $\psi:S\rightarrow \{1,\ldots,f\}$,
if $S$ is a community where $\psi(i)$ specifies which ranking facet should be used by element $i$.
In other words, as before, let $\phi_{S,\psi}^\ta(i):=|\{s\in S|i\in \pi_s^{\psi(s)}(1:\lceil \ta |S|\rceil)\}|$. Then
$(S,\psi)$ is an $(\al,\be,\ta)$-multifaceted community if for all $i\in S$, $\phi_{S,\psi}^\ta(i)\ge\al |S|$, and for
all $j\notin S$, $\phi_{S,\psi}^\ta(j)<\be |S|$.

For a bounded $f$, it is not harder to find multifaceted communities than to find regular communities. Note that in all our sampling
algorithms can be adapted as follows. Once a representative sample $\{i_1,\ldots,i_k\}$ of the  community $S$ is obtained,
we can guess the facets $\psi(i_1),\ldots,\psi(i_k)$ while adding a multiplicative $f^k$ factor to the running time. We can thus
get the set $S_2$ approximating $S$ in the same way as it is found in Algorithms \ref{alg:selfstablerough} and \ref{alg:selfstablepurify}
while adding a multiplicative factor of $f^{k_1+k_2}$ to the running time. We thus obtain a list $\listt$ that for each multi-faceted community $(S,\psi)$
contains set $S_2$ such that $\Delta(S_2,S)<\ga t/8$:

\begin{claim}
\label{cl:multf}
We can output a list $\listt$ of $(f\cdot n)^{ O\left(\log{\left(1/\gamma\right)}/{\alpha} \right)}  \left(\frac{f\cdot \fracconnect  \log{\left(1/\gamma\right)}}{\alpha}\right)^{O\left(\frac{1}{\gamma^2} \log{\left(\frac{ \fracconnect \log{\left(1/\gamma\right)}}{\alpha \gamma}\right)}\right)}$ sets, such that for each multi-faceted community $S$ there is an $S_2\in \listt$ such that
$\Delta(S_2,S)<\ga t/8$.
\end{claim}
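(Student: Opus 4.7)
The plan is to adapt the two-stage scheme (Algorithms~\ref{alg:selfstablerough} and~\ref{alg:selfstablepurify}) by bruteforcing, at every point where we sample an element of $S$, over the facet label that element should use. Fix a multi-faceted community $(S,\psi)$ of size $t$.

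First I would generalize Lemma~\ref{multiple} to the multi-faceted setting: since for any $\tS\subseteq S$ the total vote cast into $\tS$ by members of $S$ (each voting via its own facet $\psi(i)$) is at least $\al|\tS|\cdot t$, the same greedy selection produces a sequence $i_1,\dots,i_M\in S$ with $M=\log(16/\ga)/\al$ whose own $\lceil\ta t\rceil$-prefix lists, \emph{read under the facets $\psi(i_1),\dots,\psi(i_M)$}, jointly cover all but a $\ga/16$-fraction of $S$. Correspondingly I modify Algorithm~\ref{alg:selfstablerough} so that it exhaustively enumerates not only subsets $U\subseteq V$ of size $k_1$ but also all facet assignments $\psi_U:U\to\{1,\dots,f\}$, and for each $(U,\psi_U)$ it adds to $\listt_1$ the rough approximation $S_1:=\{x\in V: \exists s\in U,\ x\in \pi_s^{\psi_U(s)}(1:\lceil\ta t\rceil)\}$. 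This inflates $|\listt_1|$ by a factor of $f^{k_1}$.

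Next I would re-do Lemma~\ref{almost} inside one facet assignment. Given a rough approximation $S_1$ with $|\tS|:=|S\cap S_1|\ge(1-\ga/16)t$ and $|S_1|\le M\ta t$, draw $U_2$ uniformly from $\tS$ and guess its facets $\psi_{U_2}:U_2\to\{1,\dots,f\}$. Conditioned on $\psi_{U_2}$ agreeing with $\psi$ on $U_2$, the votes cast by $U_2$ under $\psi_{U_2}$ are i.i.d.\ copies of the votes cast by uniform members of $\tS$ under $\psi$, so Hoeffding plus Markov (exactly as in the proof of Lemma~\ref{almost}) give, with probability $1-\de$, $|\Delta(S_2,S)|\le \ga t/8$ for $k_2=O(\log(\ta M/\de\ga)/\ga^2)$. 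Algorithm~\ref{alg:selfstablepurify} is modified analogously: for each $S_1\in\listt_1$, repeat $N_2$ times the step of drawing $U_2\subseteq S_1$ of size $k_2$ and \emph{then} enumerating all $f^{k_2}$ facet assignments on $U_2$, adding the resulting $S_2$ to the output list $\listt$.

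Putting the two steps together, the total size of $\listt$ is the size bound of Theorem~\ref{main}/Corollary~\ref{number_communities} multiplied by $f^{k_1}\cdot f^{k_2}$, which matches the stated expression (with $n$ replaced by $f\cdot n$ in the $k_1$-factor and with $\fracconnect$ inflated by $f$ in the $k_2$-factor). Correctness follows because with positive probability the outer exhaustive search hits the right $(U,\psi|_U)$ of Lemma~\ref{multiple}, and the inner random sampling hits a $U_2\subseteq \tS$ for which the guess $\psi_{U_2}=\psi|_{U_2}$ is enumerated; conditioned on both events, the adapted Lemma~\ref{almost} yields an $S_2\in\listt$ with $\Delta(S_2,S)<\ga t/8$. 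The main obstacle is purely a bookkeeping one: checking that in both lemmas the only place where the original ranking $\pi_s$ of $s$ is used is through ``who $s$ votes for,'' so that replacing $\pi_s$ by $\pi_s^{\psi(s)}$ leaves every counting argument intact, and that the facet-guessing happens only on the $k_1+k_2$ sampled elements (never on all of $S$), which is what prevents an $f^{|S|}$ blow-up.
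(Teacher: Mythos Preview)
Your proposal is correct and follows essentially the same approach as the paper: adapt Algorithms~\ref{alg:selfstablerough} and~\ref{alg:selfstablepurify} by brute-forcing the facet assignment on the $k_1$ seed elements and on the $k_2$ sampled elements, incurring a multiplicative $f^{k_1+k_2}$ overhead, which is exactly what produces the $(f\cdot n)^{k_1}$ and $(f\cdot\fracconnect\log(1/\gamma)/\alpha)^{k_2}$ factors in the bound. Your observation that Lemmas~\ref{multiple} and~\ref{almost} use the ranking of a sampled element $s$ only through its vote set, so replacing $\pi_s$ by $\pi_s^{\psi(s)}$ is harmless, and that facet-guessing is confined to the $k_1+k_2$ sampled elements (hence no $f^{|S|}$ blow-up), is precisely the point the paper makes.
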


It remains to show that:

\begin{lemma}
\label{lem:multf} Suppose that $(S,\psi)$ is a valid $(\al,\be,\ta)$-multifaceted community of size $t$.
Given $t$ and a set $S_2$ such that $\Delta(S_2,S)<\ga t/8$, there is an algorithm that outputs $S$ with probability $>f^{-8 \log n / \ga^2}/2$.

Moreover,
a facet structure $\psi'$ can be recovered on $S$ so that $(S,\psi')$ is an $(\al-\ga/4,\be+\ga/4,\ta)$-multifaceted community.
\end{lemma}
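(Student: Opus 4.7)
The plan is to split the proof into two independent stages, following the hints in Theorem~\ref{thm:mutlf}. Stage one recovers the vertex set $S$ from the rough approximation $S_2$ by sampling a small seed inside $S_2$, guessing a facet assignment on the seed, and thresholding the induced vote counts. Stage two, using the now-known $S$, recovers a facet structure $\psi'$ certifying the slightly weakened community parameters, via linear programming and independent randomized rounding.

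For stage one, I would draw a multiset $U_2$ of $m = 8\log n/\gamma^2$ elements uniformly from $S_2$, then independently guess $\psi_2 : U_2 \to \{1,\ldots,f\}$ uniformly at random (a factor of $f^{-m}$ in success probability). Define $\phi(v)$ to be the number of $s\in U_2$ with $v \in \pi_s^{\psi_2(s)}(1:\lceil \theta t\rceil)$ and output $S' = \{v \in V : \phi(v) \geq (\alpha - \gamma/2)m\}$. Since $|\Delta(S_2,S)| < \gamma t /8$, a uniform draw from $S_2$ hits $S$ with probability $\geq 1-\gamma/4$, so with nontrivial probability the entire seed $U_2$ lies in $S\cap S_2$, in which case it is uniform on a subset covering all but a $\gamma/8$-fraction of $S$. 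Conditioned on $U_2\subseteq S$ and on $\psi_2 = \psi|_{U_2}$, each $v\in S$ is voted for by an expected $\geq (\alpha - \gamma/8)m$ of the seed, while each $v\notin S$ receives expected $\leq (\beta + \gamma/8)m$; applying Hoeffding's inequality with $m = 8\log n /\gamma^2$ and union-bounding over the $n$ vertices of $V$ certifies the threshold separation at $(\alpha-\gamma/2)m$ with probability $1-o(1)$. Collecting the three independent factors yields the claimed $f^{-m}/2$ lower bound on success.

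For stage two, once $S$ is in hand, introduce fractional facet variables $\{x_{s,r}\}_{s\in S, r\in [f]}$ with $\sum_r x_{s,r} = 1$ and solve the LP with constraints
\[
\sum_{s\in S}\sum_{r}x_{s,r}\,\mathbb{1}[v\in \pi_s^r(1:\lceil \theta t\rceil)] \geq \alpha t \quad \forall v\in S, \qquad \leq \beta t \quad \forall v\notin S.
\]
The original $\psi$ gives an integral feasible solution, so the LP is solvable. Given any feasible $x$, independently sample $\psi'(s)$ from the distribution $x_{s,\cdot}$. For each $v$, the induced vote total is a sum of independent $\{0,1\}$ variables whose mean equals the LP value, so by Chernoff plus a union bound over $V$, all deviations are bounded by $\gamma t/8$ with high probability, which implies $(S,\psi')$ is an $(\alpha-\gamma/4,\beta+\gamma/4,\theta)$-multifaceted community. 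In the regime where $t$ is too small for Chernoff to beat the $1/n$ union bound, brute-force enumeration over all $f^t$ facet assignments on $S$ is polynomial in $n$ and finds an admissible $\psi'$ directly.

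The delicate step is stage one: the probability that $U_2\subseteq S$ is only roughly $(1-\gamma/4)^m = n^{-O(1/\gamma)}$, which must be absorbed into the $f^{-m}/2$ target together with the Hoeffding tail. The margin is provided by the threshold $(\alpha - \gamma/2)m$, which leaves $\gamma/2$ of slack on each side to absorb both the impurity contributed by $S_2\setminus S$ and the concentration error; the bookkeeping of how this slack is apportioned between sampling bias and Hoeffding fluctuations is the main technical hurdle, while stage two is essentially a standard LP-rounding calculation.
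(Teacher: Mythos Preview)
Your stage two matches the paper exactly: solve the LP whose integral feasibility is witnessed by $\psi$, round each $\psi'(s)$ independently from the fractional solution, and use Hoeffding plus a union bound over $V$; for small $t$ brute-force over $f^t$ assignments. Nothing to add there.

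Stage one has a real gap in the form you first describe it. Conditioning on the event $U_2\subseteq S$ costs $(1-\gamma/4)^m$ with $m=8\log n/\gamma^2$, which is $n^{-\Theta(1/\gamma)}$, not a constant. Multiplying this by $f^{-m}$ and the Hoeffding success probability gives $f^{-m}\cdot n^{-\Theta(1/\gamma)}$, strictly smaller than the claimed $f^{-m}/2$. Your closing paragraph sees the problem and proposes to ``absorb the impurity from $S_2\setminus S$ into the slack'' instead, but stops short of carrying this out; as written, the three factors you collect do not multiply to the target.

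The paper executes exactly the fix you gesture at, and the clean way to do it is this: extend $\psi$ arbitrarily to a function $\psi^\ast:V\to[f]$ (say $\psi^\ast\equiv 1$ off $S$), and condition only on the event $\psi_2=\psi^\ast|_{U_2}$, which has probability exactly $f^{-m}$. Under this conditioning you are sampling $U_2$ uniformly from $S_2$ in the \emph{single-facet} affinity system determined by $\psi^\ast$, and in that system $S$ is still an $(\alpha,\beta,\theta)$-community (the defining vote counts depend only on $\psi^\ast|_S=\psi$). Now the non-faceted argument applies directly: each $v\in S$ gets at least $(\alpha-\gamma/8)t$ votes from $S_2$ and each $v\notin S$ at most $(\beta+\gamma/8)t$, so Hoeffding with $m=8\log n/\gamma^2$ samples and threshold $(\alpha-\gamma/2)m$ recovers $S$ with probability $>1/2$. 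No conditioning on $U_2\subseteq S$ is needed, and the product is $f^{-m}/2$ as required. The only change you need is to drop the $U_2\subseteq S$ step and run Hoeffding against the $S_2$-sampling distribution itself.
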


Together with Claim~\ref{cl:multf}, Lemma~\ref{lem:multf} shows that multifaceted communities can indeed be recovered in polynomial time.
\smallskip

 {\sc Theorem~\ref{thm:mutlf}}
Let $S$ be an $f$-faceted $(\al,\be,\ta)$-community. Then there is an algorithm that runs in $O(n^2)$ time and outputs $S$, as well as a facet structure $\psi'$ on $S$ such that
$(S,\psi')$ is an $(\al-\ga/4,\be+\ga/4,\ta)$-multifaceted community with probability at least
$$
(f\cdot n)^{- O\left(\log{\left(1/\gamma\right)}/{\alpha} \right)}  \left(\frac{f\cdot \fracconnect  \log{\left(1/\gamma\right)}}{\alpha}\right)^{-O\left(\frac{1}{\gamma^2} \log{\left(\frac{ \fracconnect \log{\left(1/\gamma\right)}}{\alpha \gamma}\right)}\right)} f^{-O(\log n/\ga^2)}.
$$

\begin{proof}(of Lemma~\ref{lem:multf}).
The algorithm is very simple. Guess a set $U_2$ of $m=8\log n /\ga^2$ points in $S_2$; guess a function $\psi_2$ on $U_2$; output
$S=$ the set of points that receive at least $(\al-\ga/2)t$ votes according to $(U_2,\psi_2)$.

Note that in the non-faceted case, by Hoeffding's inequality, with probability $>1/2$ selecting a set $U_2$ as above and then selecting
those points that receive at least $(\al-\ga/2)t$ votes from $U_2$ would have yielded $S$. This is because each element of $S$ receives
at least $(\al-\ga/8)t$ votes from elements of $S_2$, while each element of the complement $S^c$ receives at most $(\be+\ga/8)t$ votes
from elements of $S_2$. This reasoning extends to the multifaceted setting, {\em provided}, the function $\psi_2$ coincides with
the function $\psi$ on the elements of $U_2\cap S$. This indeed happens with probability $\ge f^{-|U_2|}=f^{-8 \log n / \ga^2}$, completing the
proof of the first part of the lemma.

For the second part of the lemma we assume that the set $S$ is known and we need to recover the facets $\psi'$ that make $S$ a community. Note
that this step is necessary in order to verify that $S$ is indeed a multifaceted community. There are two cases to consider.

{\bf Case 1:} $t\le 8 \log n / \ga^2$. In this case we can find $\psi$ by exhaustively checking all possibilities in time $O(q^{  8 \log n / \ga^2})$, which is the same
as the probability of success of the first step.

{\bf Case 2:} $t> 8 \log n / \ga^2$. In this case we use linear programming to find a fractional version $\psi_f$ of the function $\psi$ first. In other words, we find
a function $\psi_f:S\times \{1,\ldots,q\}\rightarrow [0,1]$ such that $(S,\psi_f)$ is a ``community" on average:
\begin{enumerate}
\item
for all $s\in S$, $\sum_{i=1}^f \psi_f(s,i)=1$;
\item
for all $x\in S$, $\sum_{s\in S} \sum_{i=1}^f \psi_f(s,i)\cdot \chi_{x\in \pi_s^{i} (1:\ta t)} \ge \al t$;
\item
for all $y\notin S$, $\sum_{s\in S} \sum_{i=1}^f \psi_f(s,i)\cdot \chi_{y\in \pi_s^{i} (1:\ta t)} <\be t$;
\end{enumerate}
This linear program is feasible, since the original $\psi$ is an integral solution to it. As a result, we obtain a fractional solution $\psi_f$ satisfying the three
conditions. To obtain $\psi'$ we round $\psi_f$ by sampling. In other words, we set $\psi'(s)=i$ with probability $\psi_f(s,i)$. By Hoeffding's inequality, since
$t> 8 \log n / \ga^2$, the sampling will preserve conditions 2 and 3 that were imposed on $\psi_f$ up to an additive error of $\ga/4$. Thus, by definition, $(S,\psi')$ will
be an $(\al-\ga/4,\be+\ga/4,\ta)$-multifaceted community.
\end{proof}

\comment{
\subsection{Additional Proofs in Section~\ref{sec:local}}
\label{addsec:local}

\noindent {\bf Theorem~\ref{thm:mainloc}}
Suppose $\alpha > 1/2$.  Assume $\alpha$, $\beta$, $\fracconnect$, and $\delta$ are
constants.  If $v$ is chosen uniformly at random from $S$, then with
probability at least $(2\alpha-1)(1-\delta)$ we can find $S$ in time
$O(t \log t)$.

\smallskip

\smallskip

%
%


\noindent {\bf Theorem~\ref{quasilinearbound}~~}
Suppose that $\al>1/2$. The total number of $(\ta,\al,\be)$-communities is bounded by
$$
O\left(
n \cdot\frac{1}{\min(\ga,\al-1/2)} \cdot  \left(\frac{\ta^2}{\al-1/2} \right)^{O\left(\frac{\log(\ta/ \ga(\al-1/2))}{\ga^2}\right)}
\right),
$$
which is $O(n)$ if $\al$, $\be$, and $\ta$ are constants.

\smallskip
} 

\subsection {Extensions to weighted affinity systems and to the local model}
\label{addsec:local}

We note that that Algorithm~\ref{alg:selfstableroughlocal} can be combined with our reduction from weighted to unweighted communities to obtain a local algorithm
for finding communities in the weighted case.

Extending the local approach to the multi-faceted setting is more involved, since the definition of $R(v)$ would need to be adapted to this setting. Indeed,
the multi-faceted version $R_f(v)$ of $R(v)$ can be taken to be a random element voted by a random facet $i$ of $v$. Then Algorithm~\ref{alg:selfstableroughlocal}
can be adapted by taking the threshold to be $\frac{(\al-1/2)/(\ta^2 f^2)}{t}$, where $f$ is the number of facets. Note that while an approximation to
any community $S$ can be found locally in near-linear time, finding the exact community $S$ as well as the facet structure on $S$ as in Lemma~\ref{lem:multf} will still
take $f^{O(\log n/\ga^2)}$ time.

\comment{
\subsection{An Alternative Non-local Algorithm}
\label{alternative-non-local}
The analysis in Section~\ref{sec:local} suggests an alternative way for generating rough approximations, Algorithm~\ref{alg:selfstablerough1}.

\begin{algorithm}
  \caption{Generate rough approximations}
  {\bf Input:}   Preference system $(V, \Pi)$, information $I$
  (parameters $\fracconnect$, $\alpha$, $\beta$, size $t$).
\begin{enumerate}
\item[$\bullet$] Set $\listt=\emptyset$; $\gamma=\beta-\alpha$.
\item[$\bullet$] Exhaustively search over all subsets  $U_0$ of
$V$ of size $\lceil (\log 1/\al)/\al\rceil+1$; for each $U_0$ to the $\listt$  the set
 $S_1:= \left\{x: \sum_{y\in U_0} \Pr[x=R(R(y))]\ge \frac{\al }{2\ta^2 t}\right\}$.
\end{enumerate}
{\bf Output:}  List of sets $\listt$.
\label{alg:selfstablerough1}
\end{algorithm}

\begin{theorem}
\label{main1}
Fix a $(\fracconnect,\alpha,\beta)$-self-determined  community $S$. Let
$\gamma=\alpha-\beta$, $k_1(\fracconnect, \alpha, \gamma)= O\left(\log{\left(1/\alpha\right)}/{\alpha} \right)$,
 $k_2 (\fracconnect, \alpha, \gamma)= O\left(\frac{1}{\gamma^2} \log{\left(\frac{ \fracconnect k_1}{\gamma \delta}\right)} \right)$,
  $N_2(\fracconnect, \alpha, \gamma)=O({(\fracconnect^2/\alpha^3)}^{k_2} \log{(1/\delta)})$.
  Using Algorithm~\ref{alg:selfstablerough1} together with Algorithm~\ref{alg:selfstablepurify} for steps (1) and (2) of
Algorithm~\ref{alg:selfstable}, 
  then with probability $\geq 1-\delta$ one of the elements in the list $\listt$ we output is {\em identical} to $S$.
\end{theorem}

\begin{sketch}
By using a reasoning similar to the one in Lemma~\ref{almost} we can show that there exist a set $U_0$ of $ \lceil (\log 1/\al)/\al\rceil+1$
points  such that the subset $U_1$ of points voted by at least a member in $U_0$ contains  $\geq 1-\al/2$ fraction of $S$.
We show in the following that the corresponding set $S_1$ indeed covers $S$. Fix a vertex $x\in S$. We need to show that
$$\sum_{y\in U_0} \Pr[x=R(R(y))]\ge \frac{\al }{2\ta^2 t}.$$
Let $Q\subseteq S$ be the set of elements that vote for $x$. We know that $|Q|\ge \al t$, since $x\in S$. Thus
$$|U_1\cap Q| \ge |U_1|+|Q|-|S| > \al t/2.$$
Each $z\in U_1\cap Q$ contributes at least $1/\ta^2 t^2$ to the sum  $\sum_{y\in U_0} \Pr[x=R(R(y))]$. Thus this sum  is
at least
$
(\al t/2)\cdot (1/\ta^2 t^2) = \al/(2\ta^2 t)
$.
Hence $x\in S_1$, as required. Moreover, by observing that
$$
\sum_x \sum_{y\in U_0} \Pr[x=R(R(y))] = \sum_{y\in U_0} \sum_x \Pr[x=R(R(y))] < 1/\al^2,
$$ we obtain $|S_1|< \frac{2\ta^2 t}{\al^3}.$

Since when running Algorithm~\ref{alg:selfstablerough1} we exhaustively search over all subsets of $U_1$ of $V$ of size
$k_1(\fracconnect, \alpha,\gamma)$, in one of the rounds we find a set $U_1$ s.t. $|S_1|< \frac{2\ta^2 t}{\al^3}$, $S \subseteq S_1$.
So, $\listt_1$ contains a rough approximation to $S$. Finally, using a reasoning similar to the one in Theorem~\ref{main} we get the desired conclusion.
\end{sketch}

Theorem~\ref{main1} 
 gives asymptotically better bounds than Theorem~\ref{main}
when $N_1 = n^{k_1(\theta,\alpha,\gamma)}$ is the
dominant term in the bound (e.g., when $\theta$, $\alpha$, and $\gamma$ are
constants) and especially when there is a large gap
between $\alpha$ and $\gamma$ -- since $k_1$ is reduced from
$\log(16/\gamma)/\alpha$ to $\lceil\log(1/\alpha)/\alpha\rceil+1$.  On
the other hand, Theorem 8 has worse dependence on $\theta$ and
$\alpha$ in $N_2$, so for certain parameter settings, Theorem~\ref{main} can be
preferable especially if one optimizes the constants in Lemmas~\ref{multiple} and~\ref{almost}
based on the given parameters.
}
\comment{

\subsection{Self-determined Communities in Social Networks}
\label{appendix-graph}
\noindent {\bf Theorem~\ref{lb-mishra-com}}
For any constant $\epsilon$, $\alpha=1$, $\alpha-\beta=1/2 -\epsilon$, there exist instances with $n^{\Omega(\log n)}$ $(\alpha,\beta)$-clusters.

\smallskip

\begin{proof}
Consider the graph $G_{n,p}$ with $p=1/2^l$. 
 Consider all ${n \choose k}$ sets of size $k=  \frac{2 \log n}{l} (1-\delta)$, where
 $\delta$ is a constant (determined later).  For each such set $S$,
 the probability it is a clique is $$p^{k \choose 2} \geq (1/2)^{\ell
 k^2/2} = (1/2)^{2 \log^2 n (1-\delta)^2/\ell} = n^{-k(1-\delta)}.$$
 We now want to show that conditioned on $S$ being a clique, it is
 also an $(\alpha,\beta)$-cluster with probability at least $1/2$.  This will imply
 that the {\em expected} number of $(\alpha,\beta)$-clusters is at least $$0.5 {n
 \choose k} n^{-k(1-\delta)} = n^{\Omega(\log n)}.$$
Fix such set of size $k=  \frac{2 \log n}{l} (1-\delta)$.
The probability that a node outside is connected to more than a
$(1/2+\epsilon)$-fraction of the set is upper bounded by
$$2^k \left(\frac{1}{2^l}\right)^{\frac{k}{2}(1+\epsilon)}  \leq n ^{ \frac{2}{l}} 2 ^{-\frac{lk}{2} (1+\epsilon)}= n ^{ \frac{2}{l}}  n^{- (1+\epsilon)(1-\delta)}.$$
By imposing $\frac{2}{l} - (1+\epsilon)(1-\delta) < -1 +\log_n (2)$, we get that this probability is upper bounded by $1/(2n)$.
So by union bound over all nodes we then get the desired result.
We need to impose $(1+\epsilon)(1-\delta) -\frac{2}{l} > 1 + \log_n (2)$. This is true for $\delta \leq \epsilon/4$ and $l > 12/\epsilon$ and $n$ large enough.
\end{proof}

We note that for  certain range of parameters our bounds in Theorem~\ref{basic-one} this improves over the general upper bound
given in~\cite{mishra-conf,mishra}. Moreover, we show that even in graphs with only one $(\alpha,\beta)$-cluster,
 we show that finding this cluster
is at least as hard solving the {\em planted clique problem} for planted
cliques of size $O(\log n)$, which is believed to
be hard (see, e.g., Hazan and Krauthgamer~\cite{clique-nash}).

\noindent {\bf The Hidden Clique Problem:} In this problem,
the input is a graph on n vertices drawn at random
from the following distribution $G_{n,1/2,k}$ pick a random
graph from $G_{n,1/2}$ and plant in it a clique of size
$k = k(n)$. The goal is to recover the planted clique
(in polynomial time), with probability at least (say)
$1/2$ over the input distribution. The clique
is hidden in the sense that its location is adversarial
and not known to the algorithm. 
The hidden clique
problem becomes only easier as $k$ gets larger, and the
best polynomial-time algorithm to date~\cite{aks98}, solves the problem
whenever $k = \Omega(\sqrt{n})$.  Finding a hidden clique for $k=c
\log n$ for any $c$ is believed to be hard. The decision version of this problem is also believed to be hard.

We begin with a simpler result that finding the {\em
approximately-largest} $(\alpha,\beta)$-cluster is at least as hard as the hidden
clique problem.

\begin{theorem}
Suppose that for $\alpha=1$ and $\beta-\alpha=1/4$, there was an algorithm that for
some constant $c$ could find an $(\alpha,\beta)$-cluster of
size at least
 $MAX/c$, where $MAX$ is size of the largest community with those
 parameters. Then, that algorithm could be used to distinguish (1) a
  random graph $G_{n,1/2}$  from (2) a random graph $G_{n,1/2}$  in which a clique of size
 $2c \log_2(n)$ has been planted.
\end{theorem}
\begin{proof}
We can show that with probability at least $1-1/n$ the largest clique in $G_{n,1/2}$
  largest clique has size $2\log(n)$, which implies the largest
$(\alpha,\beta)$ cluster (with $\alpha=1$ and $\beta-\alpha=1/4$) has size at {\em most} $2 \log(n)$.
On the other hand we can also show that with probability at least $1-1/n$, for $c \geq 8 \ln 2$ the planted clique of size  $2c \log_2(n)$ {\em is} a cluster with
  these parameters.  
  Thus, under the assumption that distinguishing these two cases is
 hard, the problem of finding the approximately-largest $(\alpha,\beta)$-cluster is
 hard.
\end{proof}

We now show that in fact, even finding a single $(\alpha,\beta)$-cluster is as hard
as the hidden clique problem.  Here, instead of $G_{n,1/2}$ we will
use $G_{n,p}$ for constant $p > 1/2$.  Note that the hidden clique
problem remains hard in this setting as well.\footnote{In particular,
if it were easy, then one could solve the decision version of the
hidden clique problem for $G_{n,1/2}$ by first adding additional
random edges and then solving the problem for $G_{n,p}$.  We assume
here that the planted clique has size greater than the largest clique
that would be found in $G(n,p)$.}

\noindent {\bf Theorem~\ref{hardness-one}}
For sufficiently small (constant) $\gamma$ and $\epsilon$, with probability at least $1-3/n$, we have that:
(1) the graph
$G_{n,1-\gamma -\epsilon}$ has no $(1,1-\gamma)$ clusters; and (2) a hidden
clique of size $\frac{1}{\epsilon^2} \log n$ is an $(1,1-\gamma)$ cluster.
Therefore, finding even one such cluster is as hard as the hidden clique problem.

\smallskip

\begin{proof}
Consider  $G_{n,p}$ for $p=1-\gamma -\epsilon$.
We start by showing that with probability at least $1-1/n$  the size of the largest clique is at most
$\frac{-2 \ln n}{\ln (1-\gamma-\epsilon)}$. For any $k$, the probability that there exists a clique of size $k$ is at most
$${n \choose k} p^{k \choose 2} \leq \frac{n^k}{k!} p^{k^2/2} p^{-k/2}.$$
For $k=\frac{-2 \ln n}{\ln (1-\gamma-\epsilon)}=-2 \log_p {n}$, this is
$$\frac{p^{-k/2}}{k!} n ^{-2 \log_p {n}} p^{2 (\log_p {n})^2}= \frac{p^{-k/2}}{k!} =\frac{n}{k!} = o(\frac{1}{n}).$$

This immediately implies
that with  probability at least $1-1/n$,  $G_{n,p}$ does not contains any $(1,1-\gamma)$ clusters
of size greater than $\frac{-2 \ln n}{\ln (1-\gamma-\epsilon)}$.

We now show that with probability at least $1-1/n$,  $G_{n,p}$ does not contain any $(1,1-\gamma)$ clusters of size
$ \leq \frac{-2 \ln n}{\ln (1-\gamma-\epsilon)}$. For this, we will show that for any set $S$ of size $ \leq \frac{-2 \ln n}{\ln (1-\gamma-\epsilon)}$ and any node $v$ not in $S$, the probability that
$v$ connects to at least $(1-\gamma)|S|$ nodes inside $S$ is at
least $1/\sqrt{n}.$  Because these events are independent over the different nodes $v$, this implies that the probability that no node $v$ outside $S$ connects to at least $(1-\gamma)|S|$ nodes inside $S$ is at most
$\left(1-\frac{1}{\sqrt{n}}\right)^{n-k} \leq e^{-\sqrt{n}/2}$.
By union bound over all sets $S$ of size at most $\frac{-2 \ln n}{\ln (1-\gamma-\epsilon)}$, this will imply that the probability there exits a $(1,1-\gamma)$ cluster of size at most $\frac{-2 \ln n}{\ln (1-\gamma-\epsilon)}$ is at most $1/n$.

Consider a set $S$ of size $k$ and a node $v$ outside $S$.
The probability that $v$ connects to more than $(1-\gamma)k$ nodes inside $S$ is at least
$$ {k \choose \gamma k} (1-\gamma-\epsilon)^{(1-\gamma)k}(\gamma+\epsilon)^{\gamma k} \geq \frac{1}{k} \left(\frac{(1-\gamma) k e}{\gamma k}\right)^{\gamma k}(1-\gamma-\epsilon)^{(1-\gamma)k}(\gamma+\epsilon)^{\gamma k}.$$
This follows from the fact that $${k \choose \gamma k}= \frac{k (k-1) \ldots (k-\gamma k +1) }{(\gamma k)!} \geq \frac{((1-\gamma)k)^{\gamma k}}{k (\gamma k/e)^{\gamma k}}= \frac{1}{k} \left(\frac{(1-\gamma) k e}{\gamma k}\right)^{\gamma k}, $$ where we use the fact that
$(\gamma k)! < 2 \sqrt{2 \pi \gamma k}  (\gamma k/e)^{\gamma k} < k (\gamma k/e)^{\gamma k}$.

So, the probability that $v$ connects to more than $(1-\gamma)k$ nodes inside $S$ is at least
$$\frac{1}{k} (1-\gamma-\epsilon)^{k} \left[ \frac{1-\gamma}{\gamma} \cdot \frac{\gamma+\epsilon}{1-\gamma-\epsilon} e \right]^{\gamma k}\geq [(1-\gamma-\epsilon) e ^{\gamma}]^{k}\frac{1}{k}.$$

This is decreasing with $k$ and thus it suffices to consider $k=\frac{-2 \ln n}{\ln (1-\gamma-\epsilon)}$.
 For this $k$,
 we get  that the probability that $v$ connects to more than $(1-\gamma)k$ nodes inside $S$ is at least
$$\frac{1}{k} e^{-2 \ln n} e ^{- \frac{2 \gamma \ln n}{\ln(1- \gamma -\epsilon)}}= \frac{1}{k} n^{-2 -\frac{2 \gamma}{\ln(1- \gamma -\epsilon)}}.$$
We want this to be greater than $1/\sqrt{n}$, and thus it suffices to have $-2-\frac{2\gamma}{\ln(1-\gamma)}>-0.4$. This holds for $\gamma = 0.1, \epsilon = 0.01.$

Finally, it is easy to show that with probability at least $1-1/n$, a hidden
 clique of size $k=\frac{1}{\epsilon^2}\ln n$  is a  $(1,1-\gamma)$ cluster.
This follows by noticing that every vertex outside the clique has in expectation
$k(1-\gamma-\epsilon)$ connections insides the clique, so by Hoeffding bounds,
the probability it has more than $k(1-\gamma-\epsilon)
+ \epsilon k = k(1-\gamma)$ neighbors inside the clique is at most $1/n^2$.
By union bound, we get that with probability at least $1-1/n$ every vertex outside the clique has at most $k(1-\gamma)$ neighbors inside the clique so the
planted clique is a community as desired.
\end{proof}

}

\comment{

\section{Affinity Systems and Social Networks}
\label{discussion}
As mentioned in the introduction, a social network is a partial realization (a projection) that we can
observe of social interactions induced by some underlying latent set
of affinities.  If this network is very sparse, then the affinity
system produced by direct lifting discussed in
Sections~\ref{sec:intro} and~\ref{sec:sn} may not fully capture the
underlying affinities, and one may instead want to use a different
approach to extract an affinity system based on different beliefs
about how this network arose.  For example, given a social network $G
= \{V, E, w\}$, natural approaches are:
\begin{itemize}
\item {\em Shortest Path Lifting}: If $G=(V,E)$ is an unweighted
social network, and the shortest path distance from $i$ to $j$ is
$d_{i,j}$, one may define $A_{i,j} = 1/d_{i,j}$.  The shortest path
lifting can be extended to weighted cases by appropriated
normalization.

\item {\em Personal Page Rank Lifting}: Let  $p_i$ be the personal
PageRank vector \cite{AndersenChungLang} of vertex $i$, we define
$a_{i,j} = p_{i,j} /\max(p_i)$.

\item {\em Effective Resistance Lifting}:  Let $r_{i,j}$ be effective
resistance of from $i$ to $j$ by viewing $G$ a network of resistors,
using $1/w(e)$ as the resistance of $e\in E$ \cite{Doyle:book}, we
define $a_{i,j} = min_k(r_{i,k} /r_{i,j})$.
\end{itemize}
We can also define a corresponding affinity system by using other
quantities, such as hitting time and commute time in random walks
\cite{Doyle:book}, the personalized page rank vector~\cite{AndersenChungLang}, as well as a diffusion-kernel affinity system based on
$G$~\cite{STC:book04}.  Each style of lifting corresponds to a
particular belief on how this social network may have emerged from a
latent underlying affinity system.  For instance one can think of
Shortest Path Lifting as corresponding to the belief that the social
network serves as an approximate spanner of the underlying
affinity system \cite{spanner:book}, and Effective Resistance Lifting
corresponds to the belief that a social network is
approximately based on some spectral sparsification of those underlying
affinities \cite{SpielmanTengPrecon}.  Given a social network $G
=(V,E,w)$, once we derive a corresponding affinity system $A$, we may
use our notion of self-determined community and apply our algorithms
and analysis to obtain communities in the original network.

An interesting open question would be to prove guarantees on the
closeness of communities in the lifted system to those in the latent
affinity system, as a function of the amount of sparsification that
occurs in the construction of the social network.

}

\end{document}